\def\@abssec#1{\vspace{.05in}\footnotesize \parindent .2in
{\bf #1. }\ignorespaces}
\newtheorem{theorem}{Theorem}[section]
\newtheorem{proposition}[theorem]{Proposition}
\def \Rm {\mathbb R}
\def \Cm {\mathbb C}
\def \Zm {\mathbb Z}
\def \Sm {\mathbb S}
\newcommand{\dsum}{\displaystyle\sum}
\newcommand{\dint}{\displaystyle\int}
\newcommand{\pdr}[2]{\dfrac{\partial{#1}}{\partial{#2}}}
\newcommand{\kp}{{\mathbf{k}}} 
\newcommand{\bk}{\mathbf k}
\newcommand{\bx}{\mathbf x} \newcommand{\by}{\mathbf y}
\newcommand{\mC}{\mathcal C}
\newcommand{\mF}{\mathcal F}
\newcommand{\mW}{\mathcal W}
\newcommand{\fC}{{\mathfrak C}}
\newcommand{\cout}[1]{}
\newcommand{\sgn}[1]{\,{\rm sign}(#1)}
\newcommand{\ow}{{\rm Op}^w}
\title{Quantum anomalous Hall phases in gated rhombohedral graphene}
\author{Matthew Frazier \thanks{Committee on Computational and Applied Mathematics, University of Chicago, Chicago, IL 60637; {\tt mjfrazier@uchicago.edu}} \and Guillaume Bal \thanks{CCAM, Departments of Mathematics and Statistics, University of Chicago, Chicago, IL 60637; {\tt guillaumebal@uchicago.edu}}  }
\begin{document}
 
\maketitle

\begin{abstract}
 We consider a coupled system of Dirac operators that models spin- and valley-polarized gated rhombohedral graphene (RHG) with an arbitrary number of layers. We classify all quantum anomalous Hall phases that are compatible with the model and show that a bulk-edge correspondence exists between bulk phases and chiral edge states carrying a quantized anomalous Hall current.  When the displacement field is sufficiently small compared to the interlayer coupling in the RHG application, we retrieve the known phases where the charge is given by the number of graphene layers. When the displacement field increases, we identify all possible topological phase transitions and corresponding quantized chiral edge charges. Numerical simulations confirm the theoretical findings.
\end{abstract}


\maketitle 


%
\section{Introduction}
\label{sec:intro}
Since the discovery of the quantum Hall effect \cite{klitzing1980new,PhysRevLett.49.405,avron1994,bernevig2013topological,prodan2016bulk}, it was recognized that many phases of matter in different fields of physics were topological in origin \cite{delplace2017topological,fu2022dispersion,PhysRevLett.100.013904,silveirinha2016bulk,lu2014topological}.  In topological insulators, a distinct manifestation of the topological nature of the phases is a transport asymmetry at interfaces separating insulators in different topological phases. This asymmetry is typically related to the bulk topological phases by a bulk edge correspondence (BEC) \cite{bernevig2013topological,prodan2016bulk}. We focus here on effective, macroscopic models of two dimensional systems. In that context, the BEC was demonstrated to hold for a class of effective elliptic problems, including those considered in this paper, in Refs \cite{bal2022topological,quinn2024approximations,bal2023topological,bal2024continuous}; see Refs \cite{drouot2021microlocal,silveirinha2016bulk,prodan2016bulk} for derivations of bulk-edge correspondences in different settings. A salient feature of the work in \cite{bal2022topological,quinn2024approximations} is the observation that bulk phase differences are more generally defined than differences of (possibly ill-defined) bulk phases; see also \cite{bal2019continuous,silveirinha2015chern,silveirinha2016bulk} for related works where bulk phases may be defined. The physical transport asymmetry is then related to a topological bulk-difference invariant (BDI), which remains to be computed for problems of interest. 

One such class of topological phases produce quantized electron transport referred to as the quantum anomalous Hall effect (QAHE). This theory, first proposed in a prototype model by Haldane\cite{PhysRevLett.61.2015}, describes quantized electron transport in 2d materials which, unlike in the quantum Hall effect, does not require an external magnetic field to be applied. A number of mechanisms have been used to realize the QAHE \cite{chang2023colloquium}, and one recent realization is in layered graphene systems with ABC-type, or rhombohedral, stacking. The band structure for effective models of multi-layered graphene systems was analyzed in a series of works \cite{min2008electronic,zhang2010band,zhang2011spontaneous, zhang2010spontaneous} and a first experimental result observing quantized Hall conductance in multi-layered graphene at zero magnetic field was obtained in \cite{han2024large} for a five layer rhombohedral graphene (RHG) system and \cite{choi2025superconductivity} for a 4-layer system. This system, assumed to be spin-polarized and valley-polarized, forms the main application for the effective model described in section \ref{sec:model}.

A second application may be found in the field of Floquet topological insulators (FTI). When a sheet of graphene is irradiated by a time-harmonic laser field, effective replica models may be obtained asymptotically in inverse powers of the laser frequency. These models have the same mathematical form as those in RHG applications. See \cite{Perez_2015,mciver2020light} for details on this application and \cite{bal2022multiscale} for models of the form \eqref{eq:H} considered below. For concreteness, we focus on the RHG application in this paper, though pertinent results from FTI's are cited where they coincide with the RHG model.

The effective Hamiltonian we consider in this paper models an arbitrary number of layers in gated RHG. It includes two main parameters, the interlayer coupling coefficient, and the difference of potential across the layers generated by a constant displacement field. The objective of this paper is a complete classification of the QAH phases as these parameters vary. For small values of the displacement field compared to the coupling constant, we retrieve a transport asymmetry given by the number of layers in the system. For large values of the displacement field, we retrieve results obtained in \cite{bal2022multiscale} in the context of FTI. 

As mentioned above, we use the notion of BDI's to define Chern numbers which topologically characterize the shared band gap of two bulk QAH phases, defined by different gate potentials or inter-layer couplings, which meet along an interface. We prefer this approach to the prevailing approach of defining band-by-band bulk invariants as integrals of Berry curvature for two important reasons. First, even as Haldane originally noted\cite{PhysRevLett.61.2015}, these bulk invariants are not gauge-invariant and thus predictions of absolute values of Hall conductance cannot only rely on the assignment of bulk invariants but must rely on other model considerations to fix a gauge. Second, integrals of Berry curvature are not guaranteed to be \textit{bona fide} Chern numbers unless the underlying connection is continuous (modulo a smooth gauge transformation) at every point on a closed manifold. For periodic models with a compact Brillouin zone these criteria are usually self-evident, however for continuum models we must consider a non-trivial compactification of the entire plane $\Rm^2\;$\cite{silveirinha2015chern,bal2023topological, bal2022topological}. Examples from equatorial waves \cite{tauber2019bulk}, continuum photonics\cite{silveirinha2015chern}, and the 2d Dirac model itself \cite{bal2019continuous} show that bulk invariants for continuous models often do not define \textit{bona fide} Chern numbers without regularizing terms, and indeed this is also the case for the RHG model. By considering BDI's instead of bulk invariants we eliminate the need to fix the correct gauge by considering differences in Berry curvatures and considerably simplify the construction of well-defined Chern numbers by providing much more flexible criterion for their definition (see Appendix \ref{sec:app} and \cite{bal2022topological}).

An outline for the rest of the paper is as follows. Section \ref{sec:model} presents the model Hamiltonians and the main results on their topological classification. We highlight the symmetries of the system, which greatly simplify the computation of Chern numbers \cite{silveirinha2016bulk,frazier2025topological}, while noting where simplifications in our model may provide obstructions to the observation of QAH phases in real systems. The derivation of these results is postponed to section \ref{sec:proof}, where we present a proof of our main theorem defining a BEC for the system. In section \ref{sec:num} we detail a numerical scheme used to approximate the spectrum of continuous Hamiltonians and present numerical simulations of the RHG interface Hamiltonians to provide a quantitative description of the edge modes for different values of the number of layers and the displacement field. Concluding remarks are given in section \ref{sec:conclu} while relevant results on the classification of elliptic Hamiltonians are recalled in Appendix \ref{sec:app}.

\section{Effective model and classification}
\label{sec:model}
\paragraph{Effective RHG model.}
We consider for an effective model of rhombohedral graphene with $m\geq2$ layers the following $2m\times 2m$ system of Dirac equations
\begin{equation}\label{eq:H}
 H = \begin{pmatrix} u_1+ v_0 D\cdot\sigma & B & 0 & \ldots & 0 \\
   B^* & u_2+ v_0D\cdot\sigma & B & \ddots &  \vdots \\
   0 & B^* & \ddots  & \ddots & 0 \\ 
   \vdots  & \ddots & \ddots & \ddots & B\\
   0 & \ldots & 0 & B^* & u_m+ v_0 D\cdot\sigma
   \end{pmatrix}
\end{equation}
which acts on state vectors $\psi\in H^1(\Rm^2;\Cm^{2m})$ for which the coordinates $(\psi_{2j-1}, \psi_{2j}), j \in \{1, ..., m\}$ describe a pseudo-spin coordinate for, respectively, the $A$ and $B$ sub-lattices of the $j$-th layer\cite{zhang2010band, zhang2011spontaneous}. Here $D\cdot\sigma=-i\partial_x \sigma_1 -i \tau \partial_y \sigma_2$ for $(x,y)\in\Rm^2$ parametrizing the two-dimensional material, $\sigma_{1,2}$ are standard Pauli matrices, and $\tau=\pm1$ is a valley index. For the rest of the paper, we assume the Fermi velocity $v_0$ constant and normalized to $1$ by an appropriate rescaling of the spatial variables. We also assume a valley index $\tau=1$ unless otherwise mentioned knowing that all invariants presented below should be multiplied by $\sgn{\tau}$. The real numbers $u_j$ for $1\leq j\leq m$ describe the electric potential (bias) at layer $j$.

The above effective, macroscopic model of RHG, is taken from \cite{min2008electronic} assuming only nearest-layer inter-layer coupling. The same system of equations is also used as the $N-$replica model of Floquet topological insulators considered in \cite{bal2022multiscale}. 

The matrix $B$ encodes inter-layer coupling and will be taken of the form
\begin{equation} \label{eq:B}
 B = \gamma A \quad \mbox{ or } \quad B=\gamma A^* \qquad \mbox{ for } \quad  A=\begin{pmatrix} 0 & 0 \\ 1 & 0\end{pmatrix}
\end{equation}
with $\gamma$ a parameter of the model, which in \cite{han2024large}(SM) has approximate value $\gamma=0.435\,$eV. Here, ${}^*$ denotes Hermitian transpose. The above choices of coupling correspond to ABC and CBA stacking, respectively- both rhombohedral stacking but with oppostite chirality. See \cite{min2008electronic} for a detailed discussion of graphene stacking orders.

Assuming a constant displacement field ${\rm D}=-\epsilon_0\nabla V$ generating a potential difference between layers $1$ and $m$ equal to $u$, the induced potential at layer $1\leq j \leq m$ is given by
\begin{equation}\label{eq:uj}
    u_j = \frac{u}{m-1}\big(j-\frac{m+1}2\big).
\end{equation}
We note that $u_{m+1-j}=-u_j$ for $1\leq j\leq m$, which will allow us to show that the model satisfies particle-hole symmetry. Note also that the Hamiltonian $H$ is tridiagonal when $B=\gamma A$, which we assume from now on unless mentioned otherwise. 
\medskip
\paragraph{Bulk and Interface Hamiltonians.} The bulk Hamiltonians $H_B$ are defined as the model $H$ of \eqref{eq:H} with all coefficients constant. We denote the two bulk operators $H_B^{N/S}$ corresponding to a gate potential $u = u^N$ for $H_B^N$ and $u =u^S$ for $H_B^S$ while all coefficients $u_j$ are given by \eqref{eq:uj} ($N/S$ standing for North/South). We consider $\gamma$ to be a given constant throughout the rest of the paper.

The interface Hamiltonian $H_I$ is defined as $H$ in \eqref{eq:H} with now $u_j=u_j(y)$. We model a transition from one given displacement field generating the potential $u(y)=u^N$ for $y\geq R>0$ to another displacement field generating $u(y)=u^S$ for $y\leq -R$, assuming that $u(y)$ is smooth in the interval $y\in [-R, R]$. 

Following \eqref{eq:weylquantization} in the appendix, the Weyl symbol of $H_I$ is given by
\[ a(x,y,k_x,k_y) = \begin{pmatrix} u_1(y)+ v_0 \kp\cdot\sigma & B & 0 & \ldots & 0 \\
   B^* & u_2(y)+ v_0\kp\cdot\sigma & B & \ddots &  \vdots \\
   0 & B^* & \ddots  & \ddots & 0 \\ 
   \vdots  & \ddots & \ddots & \ddots & B\\
   0 & \ldots & 0 & B^* & u_m(y)+ v_0 \kp\cdot\sigma
   \end{pmatrix}\]
with $\kp\cdot\sigma=k_x\sigma_1+k_y\sigma_2$. Our results in Theorem \ref{thm:gap} will show that the symbol satisfies hypothesis [H1] in \cite{quinn2024approximations,bal2022topological} as an elliptic operator of order $m=1$ that is gapped for $|y|$ sufficiently large; see the appendix for more details. The Hamiltonian $H_I$ therefore models a transition about $y\approx0$ between two (topological) insulators for $|y|\geq R$.

Our objective is to propose a classification of $H_I$ based on the transport asymmetry observed along the interface $y\approx0$. This quantized asymmetry is independent of the profile $u(y)$for any given values of $u^N, u^S$. 

\medskip
\paragraph{Symmetries of bulk Hamiltonians.} The constant-coefficient bulk Hamiltonians admit the spectral representation in the Fourier variables
\begin{equation}\label{eq:HB}
  H_B = \mF^{-1} \hat H_B(\kp) \mF
\end{equation}
where $\mF$ is Fourier transform and $\hat H_B(\kp)$ is a $2m\times 2m$ Hermitian matrix where $D_x$ and $D_y$ are replaced by $k_x$ and $k_y$ with $\kp=k_x+i k_y\in\Cm$ (identified with $(k_x,k_y)\in\Rm^2$ whenever necessary). The spectrum of $H_B$ is then given by $2m$ branches of (absolutely continuous) spectrum $\bk\mapsto E_j(\bk)$ parametrized by $\kp\in\Cm$. These branches are in fact analytic as we will show that they are simple (see  \cite[Theorem VII.1.7 and Section VII.3.1]{kato2013perturbation}). 

The bulk Hamiltonians satisfy a number of important symmetries. We recall that $H_B$ is defined in \eqref{eq:H} with $u_j$ given in \eqref{eq:uj} for $u\not=0$ and $B=\gamma A$ for $\gamma\not=0$. We assume a valley index $\tau=1$.
Let $\kp=ke^{i\theta}$ for $k\geq0$ and $\theta\in[0,2\pi)$ and observe that
\[  \mF D\cdot \sigma \mF^{-1} = k_x\sigma_1 + k_y \sigma_2 = \begin{pmatrix}  0 & \kp^* \\ \kp & 0\end{pmatrix} = 
k \begin{pmatrix}  0 & e^{-i\theta} \\ e^{i\theta} & 0\end{pmatrix}.\]
Define the unitary (diagonal) matrix:
\begin{equation}\label{eq:U}
  U(\theta) = {\rm Diag} (1, e^{i\theta},e^{i\theta},e^{2i\theta},\ldots,e^{i(m-1)\theta},e^{im\theta}).
\end{equation}
Define the $2m\times 2m$ matrices $\Gamma_j={\rm Diag}(\sigma_j)$ for $j=1,2,3$ and $G_1$ the matrix with $\sigma_1$ on the (block) antidiagonal. We denote by $\hat H_B(\kp,\gamma,u)$ the operator $\hat H_B(\bk)$ constructed with parameters $(\gamma,u)$ and $\hat H_B(k, \gamma, u)$ to be $\hat H_B(\kp,\gamma,u)$ with $\theta = 0$ so that $\bk = |\bk| = k \in \Rm$. 

\begin{proposition} \label{prop:symmetries}
 Let $\hat H(\kp)=\hat H_B(\kp)$ be defined as above with $\gamma\not=0$ and $u\not=0$. Then 
 \begin{eqnarray}
   U^*(\theta) \hat H(\kp, \gamma, u) U(\theta) &=& \hat H(k, \gamma, u),\quad
  G_1 \hat H(k,\gamma, u) G_1 = \hat H(k,\gamma, -u), \\
  G_1  \Gamma_3 (-\hat H(k, \gamma, u)) \Gamma_3 G_1 &=& \hat H(k, \gamma, u),\quad \Gamma_3 \hat H(k,\gamma, u) \Gamma_3 = H(-k,-\gamma, u).
 \end{eqnarray}
\end{proposition}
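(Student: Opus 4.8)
The plan is to verify all four identities by a direct computation on the block structure of $\hat H_B$, viewed as an $m\times m$ array of $2\times2$ blocks: the diagonal block at layer $j$ is $u_j I + \kp\cdot\sigma$ with $\kp\cdot\sigma = k\bigl(\begin{smallmatrix} 0 & e^{-i\theta}\\ e^{i\theta} & 0\end{smallmatrix}\bigr)$, the superdiagonal blocks are $B=\gamma A$, and the subdiagonal blocks are $B^*=\gamma A^*$ (here $I$ is the $2\times2$ identity). The three conjugating matrices act blockwise by simple rules that I would record once and reuse: $U(\theta)$ of \eqref{eq:U} multiplies layer $j$ by the diagonal block $\mathrm{Diag}(e^{i(j-1)\theta}, e^{ij\theta})$; $\Gamma_3$ conjugates each block by $\sigma_3$; and $G_1$ reverses the block order $j\mapsto m+1-j$ while conjugating each block by $\sigma_1$ (note $G_1=G_1^*=G_1^{-1}$). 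Everything then reduces to a handful of Pauli relations, $\sigma_3\sigma_1\sigma_3=-\sigma_1$, $\sigma_1 A^*\sigma_1 = A$, $\sigma_1 A\sigma_1 = A^*$, $\sigma_3 A\sigma_3 = -A$, $\sigma_3 A^*\sigma_3=-A^*$, together with the potential symmetry $u_{m+1-j}=-u_j$ noted after \eqref{eq:uj}.

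For the first identity I would argue that $U(\theta)$ is designed precisely to absorb $\theta$. Conjugating the diagonal block $u_j I + \kp\cdot\sigma$ by $\mathrm{Diag}(e^{i(j-1)\theta}, e^{ij\theta})$ fixes $u_j I$ and sends the off-diagonal entries $\kp^*,\kp$ to $\kp^* e^{i\theta}=k$ and $\kp e^{-i\theta}=k$, producing the $\theta=0$ block $u_j I + k\sigma_1$. For the coupling, the single nonzero entry of $B$ in block position $(j,j+1)$ links component $2$ of layer $j$ (phase $e^{ij\theta}$) to component $1$ of layer $j+1$ (phase $e^{ij\theta}$); as these phases coincide, $U_j^* B\, U_{j+1}=B$ and $U_{j+1}^* B^*\, U_j=B^*$. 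Every block is thus returned to its $\theta=0$ value, giving $U^*\hat H(\kp)U=\hat H(k)$.

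For the remaining three I would run the blockwise rules above. In (4), $\sigma_3$-conjugation fixes $u_j I$, flips $k\sigma_1\mapsto -k\sigma_1$, and sends $B=\gamma A\mapsto -\gamma A$ and $B^*\mapsto -\gamma A^*$, i.e. $k\mapsto -k$ and $\gamma\mapsto-\gamma$, yielding $\hat H(-k,-\gamma)$ (the $H$ on the right of the stated identity should read $\hat H$). In (2), the $G_1$ reversal places the layer-$(m+1-j)$ diagonal block, of potential $u_{m+1-j}=-u_j$, into slot $j$, while $\sigma_1$-conjugation fixes $k\sigma_1$ and, via $\sigma_1 B^*\sigma_1=B$ and $\sigma_1 B\sigma_1=B^*$, restores $B,B^*$ in the correct off-diagonal slots; since $-u_j$ is exactly the layer-$j$ potential of $\hat H(k,-u)$, this gives $G_1\hat H(k,u)G_1=\hat H(k,-u)$. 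In (3), chaining the overall sign of $-\hat H$, the $\sigma_3$ flip $k\sigma_1\mapsto-k\sigma_1$ with $\sigma_3 A\sigma_3=-A$, and the $G_1$ reversal with $u_{m+1-j}=-u_j$ returns every block to its original value, establishing the particle–hole relation $G_1\Gamma_3(-\hat H(k))\Gamma_3 G_1=\hat H(k)$.

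There is no conceptual obstacle; the only place demanding care is the bookkeeping of block indices under the reversal $j\mapsto m+1-j$ induced by $G_1$, which exchanges the superdiagonal and subdiagonal coupling slots, so one must check that $\sigma_1 B^*\sigma_1=B$ lands the right coupling in each off-diagonal position and that the relation $u_{m+1-j}=-u_j$ is applied at the correct index. I expect identity (3) to be the most delicate of the four, since it is the only one chaining all three operations at once and relying simultaneously on the Pauli sign rules and the potential symmetry.
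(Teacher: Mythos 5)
Your proposal is correct and takes essentially the same approach as the paper: the paper dispatches the proposition as ``a simple verification using that $-\hat H(k,\gamma,u)=\hat H(-k,-\gamma,-u)$,'' and your blockwise computation (the Pauli conjugation rules for $U(\theta)$, $\Gamma_3$, $G_1$ together with $u_{m+1-j}=-u_j$) is exactly that verification carried out in detail. Your chained argument for the particle--hole identity amounts to composing the paper's sign relation with identities (2) and (4), so there is no substantive difference in method.
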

The proof of this result is a simple verification using that $-\hat H_B(k,\gamma,u)=\hat H_B(-k,-\gamma,-u)$. This shows that $\hat H_B(\bk, \gamma, u)$ and $-\hat H_B(\bk,\gamma, u)$ are unitarily equivalent implying the parity relation $E_{2m+1-j}(\kp)=-E_j(\kp)$. These relations also show that the branches of spectrum of $H_B$ are invariant by rotation.

Note that the invariance by rotation and the particle-hole invariance are properties of the simplified model \eqref{eq:H}. In the presence of more general coupling terms, for instance when $B$ is a full matrix, or when interlayer couplings involve more distant layers \cite{han2024large,zhang2010band,zhang2011spontaneous}, then these properties no longer always hold. 

Although we assume for most of our analysis that $\tau=1$ and $B=\gamma A$ we also note that
\begin{eqnarray}\label{eq:symtauA}
\hat H_B(\kp, \tau=-1) &=& \hat H_B(\kp^*, \tau=1) = U^*(\theta) \hat H_B(k, \tau=1) U(\theta)\\
\nonumber
\hat H_B(\kp, B =\gamma A^*) &=& \Gamma_1 \hat H_B(\kp^*,B = \gamma A)\Gamma_1 = \Gamma_1 U^*(\theta) \hat H_B(k, B =\gamma A) U(\theta)\Gamma_1.
\end{eqnarray}
The spectrum of $\hat H_B$ is therefore invariant with respect to the map $A\to A^*$ (corresponding to a CBA stacking rather than an ABC stacking). The unitary transformation from $\hat H_B(\kp)$ to $\hat H_B(k)$ encoded by $U(\theta)$ in \eqref{eq:U} is therefore replaced by $U^*(\theta)$ when $\tau=-1$ and $B=\gamma A$ and by $\Gamma_1 U^*(\theta)$ when $\tau=1$ and $B=\gamma A^*$. Conjugation by $\Gamma_1 U(\theta)$ similarly maps $\hat H(\kp, \tau=-1,\gamma A^*)$ to $\hat H(k, \tau=-1,\gamma A^*)$. These unitary equivalences will allow us to extend the BEC to both stacking orders and valley indices without additional analysis.
\medskip
\paragraph{Gapped phases.}
From the symmetries of the bulk Hamiltonians, we deduce the following property on the branches of spectrum of the bulk Hamiltonians.

\begin{theorem}[Gapped Hamiltonian]\label{thm:gap}
 Let $\hat H_B^{h}(\kp)$ be defined as above for $\gamma>0$ and $|u|>0$ with $h\in\{N,S\}$. Let $k=|\kp|>0$. Then there is $0<E_0=E_0(k, \gamma,u)$ such that $\hat H_B^{h}(\kp)$ is gapped in $(-E_0,E_0)$. Moreover, all eigenvalues of the operator $\hat H_B^h(\kp)$ are simple.
\end{theorem}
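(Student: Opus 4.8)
The plan is to exhibit $\hat H^h_B(\kp)$, for fixed $k=|\kp|>0$, as a Jacobi matrix and then read off both claims from the classical spectral theory of such matrices together with the particle-hole symmetry already recorded in Proposition~\ref{prop:symmetries}. First I would use the rotational symmetry $U^*(\theta)\hat H(\kp)U(\theta)=\hat H(k)$ to replace $\kp=ke^{i\theta}$ by the real value $k>0$; since $U(\theta)$ is diagonal unitary this is a unitary conjugation that preserves the spectrum, so it suffices to analyze $\hat H(k)$. In the ordered sublattice basis $(a_1,b_1,\dots,a_m,b_m)$ the block-tridiagonal form \eqref{eq:H} becomes a genuinely tridiagonal $2m\times 2m$ matrix: the diagonal is $(u_1,u_1,u_2,u_2,\dots,u_m,u_m)$ and the single off-diagonal carries the entries $k,\gamma,k,\gamma,\dots,k$, coming alternately from the intralayer Dirac term and from $B=\gamma A$. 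I would verify this by tracking the nonzero entries of the diagonal blocks $\begin{pmatrix} u_j & k \\ k & u_j\end{pmatrix}$ and of $B,B^*$.

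The crucial observation is that for $k>0$ and $\gamma>0$ every off-diagonal entry is strictly positive, so $\hat H(k)$ is a real symmetric Jacobi matrix. Simplicity of the eigenvalues is then the standard fact for Jacobi matrices: writing the eigenvalue equation as the three-term recurrence $c_{l-1}v_{l-1}+(a_l-E)v_l+c_lv_{l+1}=0$ with all $c_l\neq 0$, the first component $v_1$ cannot vanish (else the recurrence forces $v\equiv 0$) and then determines the whole eigenvector, so each eigenspace is one-dimensional. This proves the last assertion of the theorem.

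For the gap I would combine simplicity with the particle-hole relation $G_1\Gamma_3(-\hat H(k))\Gamma_3 G_1=\hat H(k)$ of Proposition~\ref{prop:symmetries}. Setting $V=G_1\Gamma_3$ (unitary, since $G_1$ and $\Gamma_3$ are Hermitian involutions) this reads $V\hat H(k)V^*=-\hat H(k)$, so the spectrum of $\hat H(k)$ is symmetric about $0$ and the nonzero eigenvalues occur in pairs $\{E,-E\}$. A counting argument then closes the gap: there are $2m$ eigenvalues, an even number, all simple; if $0$ were an eigenvalue it would be simple, leaving $2m-1$ (odd) nonzero eigenvalues that must pair off under $E\mapsto -E$, which is impossible. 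Hence $0\notin\mathrm{spec}\,\hat H(k)$, and one may take $E_0=E_0(\gamma,u,k)=\min_j|E_j(\kp)|>0$, giving a genuine gap $(-E_0,E_0)$. The two cases $h\in\{N,S\}$ are handled at once because the symmetry $G_1\hat H(k,u)G_1=\hat H(k,-u)$ makes $H^N_B$ and $H^S_B$ unitarily equivalent.

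I would expect the only delicate point to be the reduction to Jacobi form: one must check that after the rotation all off-diagonal entries are simultaneously nonzero, which is exactly where the hypotheses $k>0$ and $\gamma>0$ enter (at $k=0$ the intralayer entries vanish, the matrix decouples into layer blocks, and neither simplicity nor the gap need persist). The remaining steps — the recurrence argument for simplicity and the parity count for the gap — are routine once the tridiagonal structure and the two symmetries are in place.
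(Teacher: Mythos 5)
Your proposal is correct and follows essentially the same route as the paper: reduce by rotational symmetry to the real tridiagonal Jacobi matrix $\hat H(k)$ with positive off-diagonal entries $k,\gamma,k,\gamma,\dots$, deduce simplicity from the three-term recurrence, and rule out a zero eigenvalue via particle-hole symmetry. Your parity-counting formulation of the last step (an odd number of nonzero eigenvalues cannot pair under $E\mapsto -E$) is just a cleaner phrasing of the paper's argument that $E_{m}(\kp)=0$ would force $E_{m+1}(\kp)=-E_m(\kp)=0$ and hence a double eigenvalue, contradicting simplicity.
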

\begin{proof}
    By rotational invariance we may assume that $\kp=k$ for $k>0$. Then, $\hat H(k)$ is a symmetric (Jacobi) tridiagonal matrix with positive off-diagonal entries given by $k>0$ and $\gamma>0$. We know that their eigenvalues are simple (since the first component of any eigenvector uniquely determines the other components iteratively). Since $E_{2m+1-j}(\kp)=-E_j(\kp)$ and $|E_j(\kp)|\to\infty$ as $|\kp|\to\infty$ (by ellipticity of $H_B$), we deduce that $E_j(\kp)=0$ is not possible unless  $\kp=0$ and hence the existence of a gap $E_0(k, \gamma, u)>0$. 
\end{proof}
\medskip

This result shows that gap closing at $E=0$ can only occur when $\bk=0$. Such closings do occur and induce (bulk and edge) phase transitions. When no such gap closing occurs at $\kp=0$, then the spectral gap is global in the sense that there exists $0<E_0(\gamma,u)$ such that the spectrum $\hat H^h_B(\kp)$ does not intersect $(-E_0,E_0)$ for any $\kp$ (by a compactness argument since $|E_j(\kp)|\to\infty$ as $|\kp|\to\infty$ and $\kp\to E_j(\kp)$ is continuous; in fact real-analytic; (see \cite[Theorem VII.1.7 and Section VII.3.1]{kato2013perturbation})).

For such values of $(u^h,\gamma)$, we therefore unambiguously define the projectors
\begin{equation}\label{eq:proj}
    \Pi^h(\kp) = \chi(\hat H^h_B(\kp)<0)
\end{equation}
for $h\in\{N,S\}$ projecting onto the $m-$dimensional vector space spanned by the eigenvectors associated to the negative eigenvalues of $\hat H^h_B(\kp)$. Recall that we assume $\gamma$ does not vary, while $\hat H_B^N(\bk)$ is characterized by a gate potential $u^N$ and $\hat H_B^S(\bk)$ is characterized by a gate potential $u^S$. 

We define the bulk-difference invariant (BDI) \cite{bal2022topological} as follows:
\begin{equation}\label{eq:ChernBDI}
    {\rm BDI}(\gamma, u^S, u^N)=  \fC[\Pi^S,\Pi^N]   := \frac  i{2\pi} \int_{\Rm^2} {\rm tr} \Pi^S d\Pi^S\wedge d\Pi^S
 - \frac  i{2\pi} \int_{\Rm^2} {\rm tr} \Pi^N d\Pi^N \wedge d\Pi^N.
\end{equation}
Although there are no general guarantees that integrals of this form are well-defined, given appropriate assumptions on $\Pi^N, \Pi^S$ (e.g. \eqref{eq:gluing} in the following sections) the invariant may be interpreted as the Chern number of the family of projectors $\{\Pi^N,\Pi^S\}$ defined by stereographic projection of $\Rm^2$ onto the unit sphere and is thus guaranteed to take values in the integers by their interpretation as Atiyah-Singer indices\cite{bal2022topological, bal2023topological}. By contrast, for the operators considered in this paper the integrals of the Berry curvature $\Pi^h d\Pi^h\wedge d\Pi^h$ appropriately normalized are not guaranteed to be integer-valued \cite{bal2019continuous}, and even when they are, are not gauge-invariant. Therefore, whereas absolute phases for the N/S insulators may not be defined unambiguously, phase differences as in \eqref{eq:ChernBDI} are indeed well-defined; see Ref \cite{bal2022multiscale,bal2024continuous} and the appendix for more detail.

The Hamiltonian $H_I$ and corresponding bulk insulators $H^{N/S}_B$ satisfy the ellipticity conditions of \cite{bal2022topological,quinn2024approximations}. As a consequence, the bulk-edge correspondence \cite{bal2022topological,quinn2024approximations} applies and the asymmetry of electron transport governed by $H_I$ is then exactly given by the above BDI. We refer to the Appendix for a summary of results leading to the definition of \eqref{eq:ChernBDI}, of the edge invariant $\sigma_I[H_I]$ quantifying total electron transport along the interface $y \approx 0$, and the bulk-edge correspondence stating that $2\pi\sigma_I[H_I]={\rm BDI}(\gamma,u^S, u^N)$.
\medskip
\paragraph{Classification of QAH phases.} It remains to compute the BDI as a function of $(\gamma,u^S, u^N)$ provided that $E_0(\gamma,u)>0$ and to identify the values of $(\gamma,u)$ where phase transitions may occur. The number of possible such phases depends on the number of layers $m$. 

At $k=0$, the matrix $\hat H_B(\bk)$ splits into $1\times1$ or $2\times2$ blocks. The latter are of the form $(u_j,\gamma;\gamma,u_{j+1})$ for $1\leq j\leq m-1$. The eigenvalues of these blocks cross $E=0$ when $u_ju_{j+1}=\gamma^2$. Assuming a constant displacement field, this is the constraint 
\begin{equation}\label{eq:delta}
 \frac{ \gamma^2(m-1)^2 }{u^2} = (j-\frac m2)^2 - \frac14,\qquad \pm\big( j-\frac m2\big) = \delta(\gamma,u) := \sqrt{\frac{\gamma^2(m-1)^2 }{u^2}  + \frac 14}.
\end{equation}
When this constraint is satisfied each eigenvalue $E=0$ is degenerate of multiplicity $2$ since $u_{m+1-j} = -u_j$. Since $\gamma>0$ implies $\delta(\gamma, u)>\frac12$, we deduce that $\hat H_B(\kp)$ has a 2-fold degenerate eigenvalue $E = 0$ for critical points $u$ for which $\delta(\gamma, u)$ takes the values $j-\frac m2$ for $\lceil\frac m2+1\rceil\leq j\leq m-1$. This is $\frac12(m-3)$ values when $m$ is odd and $\frac m2-1$ values when $m$ is even, or $\left\lfloor \frac m2-1\right\rfloor$ in both cases. Increasing $u>0$ from a small value compared to $\gamma$, we thus observe a first transition at $\delta(\gamma, u)=1$ when $m=4$ and $\delta(\gamma, u)=\frac32$ when $m=5$. 

Next define 
\begin{equation}\label{deltaj_def}
\delta_k:=k + \lceil \frac m2\rceil -\frac m2 = 
\begin{cases}
    k & m \;\text{even}\\
    k + \frac 12 & m\;\text{odd}
\end{cases}
\end{equation}
for $0\leq k\leq  \lfloor \frac m2 \rfloor$. For cases we consider below where $k< 0$ we define $\delta_k = \delta_{-k}$. Setting $j = k + \lceil\frac m2\rceil$ and rearranging \eqref{eq:delta} in terms of $\delta_k$ we deduce that phase transitions happen at critical points of potential $u$, assuming $\gamma$ remains constant:
\begin{equation}\label{eq:ucrit}
u_{\pm k} = \pm \gamma \frac{m-1}{\sqrt{\delta_{k}^2-\frac 14}}
\end{equation}
for $1 \le k \le \lfloor \frac m2 -1\rfloor$. Noting that $u = 0$ is also a critical value (where in fact all eigenvalues coincide at 0 when $\bk  =0$)\cite{min2008electronic}, there are therefore $2\lfloor \frac m2\rfloor$ distinct bulk topological phases which we label by $j \in \{\pm 1,\pm2, ...,\pm\lfloor\frac m2\rfloor\} $ defined by:
\begin{equation}\label{eq:phase_def}
    \hat H_B(\bk, \gamma, u) \in \text{sgn}(u)j \quad \text{iff}\quad \delta_{j-1} < \delta(\gamma, u) < \delta_j; \;\;1\le j\le \lfloor \frac m2\rfloor-1
\end{equation}
and $\hat H_B(\bk, \gamma, u) \in \text{sgn}(u) \lfloor\frac m2\rfloor$ iff $\delta(\gamma, u) > \delta_{\lfloor \frac m2\rfloor-1}$ (phase $\lfloor \frac m2\rfloor$ has no upper bound in $\delta(\gamma, u))$.

We thus observe the existence of only two phases when $m=2$ or $m=3$. Assuming $m\geq4$, the smallest value of $u$ at which we may observe a transition for a fixed $\gamma$ is when $k=\lfloor\frac m2 -1\rfloor$ leading to first critical value for the displacement field of
\[
u_{\lfloor\frac m2 -1\rfloor} = \gamma \frac{2(m-1)}{\sqrt{(m-3)(m-1)}} . 
\]
For $m$ large, this is asymptotically $u\approx 2\gamma$, which is significantly larger than the values displayed in current experiments \cite{han2024large, choi2025superconductivity}, which are constrained to around $u\approx 0.221$ eV and $u\approx 0.209$ eV respectively by a sufficient spin-orbit coupling (SOC) ensuring an appropriate spin polarization (recall that $\gamma$ is estimated to be 0.435 eV \cite{han2024large}). The largest critical value of $u$ is given by \eqref{eq:ucrit} with $\delta_k=1$ when $m\geq4$ is even and $\delta_k=\frac32$ when $m\geq5$ is odd so that the largest transition value of $u$ is proportional to $m\gamma $.

We are now ready to state the main result of the paper, the proof of which is delayed to the next section:
\begin{theorem}[BDI for QAH phases] \label{thm:class}
 Suppose that $\hat H^N_B$ is in phase $j$ and $\hat H_B^S$ is in phase $k$ according to \eqref{eq:phase_def} and let BDI$\,(\gamma,u^S, u^N)$ be defined as in \eqref{eq:ChernBDI} with $m\geq2$. Then the BDI$(\gamma, u^S, u^N)$ for a transition between the two bulk systems $\hat H_B^S$ and $\hat H_B^N$ is given by:
 \begin{equation}\label{eq:BDIj}
{\rm BDI}_{kj} = \begin{cases}
    \text{sgn}(j)\left(\delta_{k}(\delta_{k}-1)-\delta_{j}(\delta_{j}-1)\right) & \text{sgn}(j) = \text{sgn}(k)\\
    \text{sgn}(j)\left(\frac {m^2}2-(\delta_{j}(\delta_{j}-1)+\delta_{k}(\delta_{k}-1))\right) & \text{sgn}(j) = -\text{sgn}(k).
\end{cases}
\end{equation}
 for $j, k\in \{\pm1, \pm2, ...,\pm \lfloor \frac m2\rfloor\}$ and $\delta_j$ given by \eqref{deltaj_def}.
\end{theorem}

Thus, when $j=-k =\lfloor \frac m2\rfloor$ and $\delta_{\lfloor \frac m2\rfloor}=\frac{m}2$ for $m\geq3$, we find a minimal value for the BDI, BDI$_{(-\lfloor \frac m2\rfloor,\lfloor \frac m2\rfloor)}=m$ for $\frac{m}2-1<\delta(\gamma, u^{N/S})$, i.e., for $u^N$ and $u^S$ sufficiently small. This is the setting considered in \cite{han2024large, choi2025superconductivity, zhang2010band, zhang2011spontaneous} with $m=5$ and $m = 4$ respectively for the experimental results. On the other hand, when $u^N$ and $u^S$ are so large that $\delta_0 \le \frac12<\delta(\gamma, u^{N/S}) <\delta_1$, then the QAHE is maximized for $j=-k =1$ by $\frac12 (m^2-3)$ when $m$ is odd and by $\frac12 m^2$ when $m$ is even. The result for $m=2n+1$ corresponds to the $n$-replica model analyzed in \cite{bal2022multiscale}.

Keeping in mind that bulk phases with gate potential $-u$ and $u$ are energetically equivalent we consider the transition between phases $-k$ and $j$ to be equivalent to the transition between $-j$ and $k$ (note the equivalence of \eqref{eq:BDIj} for these two these transitions). Also note that ${\rm BDI}_{kj} = -{\rm BDI}_{jk}$. Disregarding these degeneracies we find $\lfloor\frac m2\rfloor^2$ distinct transitions between QAH phases. For concreteness, Figure \ref{fig:predict} enumerates all possible absolute values of BDI's according to \eqref{eq:BDIj} for a range of values $2\le m \le 9$.
\begin{figure}[h!]\small
\begin{tabular}{ccccccccccccccccc}
 \textbf{m =}&\textbf{2$^\dag$}\\
 &\textbf{3$^\dag$}\\
 &2 & \textbf{4$^\dag$} & 6 & 8$^\dag$\\
 &3 & \textbf{5$^\dag$} & 8 & 11$^\dag$\\
 &2 & 4 & \textbf{6$^{*\dag}$} & 10 & 12 & 14$^\dag$ &  16 & 18$^\dag$\\
 &3 & 5 & \textbf{7$^\dag$} & 8 & 12 & 15 & 17$^\dag$ & 20 & 23$^\dag$\\
 &2 & 4 & 6* & \textbf{8$^\dag$} & 10 & 12 & 14 & 18 & 20$^{*\dag}$ & 24 & 26 & 28$^\dag$ & 30 & 32$^\dag$\\
 &3 & 5 & 7& 8 & \textbf{9$^\dag$} & 12 & 15 & 16 & 21 & 23$^\dag$ & 24 & 28 & 31 & 33$^\dag$ & 36 & 39$^\dag$
\end{tabular}
\caption{All possible values of the BDI for the QAHE with $2\le m\le 9$. The symmetric case in which $u^N = -u^S > 0$ is denoted by $^\dag$; see Figure \ref{fig:5-layer_phase}. Bold numbers in each row reflect the number of layers \textbf{m} in each row (which is always a BDI). An * reflects that there is a degeneracy of two topological transitions with the same BDI which are \textit{not} energetically equivalent.}
\label{fig:predict}
\end{figure}

The values of the invariants in \eqref{eq:BDIj} were computed for a valley index $\tau=+1$ and ABC stacking $B=\gamma A$.  Using \eqref{eq:symtauA} and the fact that $U'(0)$ in \eqref{eq:R} below is replaced by $-U(0)'$ when $U(\theta)$ is replaced by $U^*(\theta)$, we find that the above invariants satisfy
\[
{\rm BDI}_{kj}[1,\gamma A] = - {\rm BDI}_{kj}[-1,\gamma A] = - {\rm BDI}_{kj}[1,\gamma A^*]= {\rm BDI}_{kj}[-1,\gamma A^*],
\]
where we use the notation ${\rm BDI}_{kj}={\rm BDI}_{kj}[\tau,B]$. For a given expression for $B$, we thus obtain that ${\rm BDI}_j[\tau=1]+{\rm BDI}_j[\tau=-1]=0$, implying that the topological invariant accounting for both valleys $\tau=\pm1$ is always trivial in the model we consider here.

\section{Proof of Theorem \ref{thm:class}.}
\label{sec:proof}

In this section we prove Theorem \ref{thm:class}. Following \eqref{eq:ChernBDI}, the derivation is based on computing the integral of Berry curvature associated to $\hat H^h(\bk)$ for $h=N/S$. Let $\hat H(\kp)=\sum_j E_j(k) \Pi_j(\kp)$ denote one of them. Following Theorem \ref{thm:gap}, the spectrum of this Hamiltonian is invariant by rotation with $E_j(k)$ simple eigenvalues, i.e., associated to $\Pi_j(\kp)$ a rank-one projector, and so that $E_{2m+1-j}(k) = -E_j(k)$ for $1\leq j\leq 2m$. In this section we take the view that $\bk \in \Rm^2$ to be consistent with \eqref{eq:ChernBDI} and the supporting theory, and identify $(k, \theta)$ as the polar coordinates of $\bk \in \Rm^2$.
\paragraph{Simplified formula for rotationally symmetric Hamiltonians.} Computing the integrals appearing in \eqref{eq:ChernBDI} analytically drastically simplifies in the presence of rotational symmetry. The Chern number of the projectors $\Pi^h(k, \theta)$ (recall \eqref{eq:proj}) is additive and hence may be written as the sum over branches $E_j(k)<0$
 of the rank-one projectors $\Pi_j(k, \theta)$ onto the eigenspace associated to $E_j(k)$\cite{bernevig2013topological}.

Assume $\Pi_j(k, \theta)=|\psi_j(k, \theta)\rangle\langle\psi_j(k, \theta)|$ is such a rank-one projector. We then verify that
\[
 {\rm tr} \Pi_j d\Pi_j \wedge d\Pi_j = d A_j,\qquad A_j(k, \theta) = (\psi_j(k, \theta),d\psi_j(k, \theta))
\]
where $A$ is a one-form-$i\Rm$-valued (Berry) connection assuming $d\psi_j(k, \theta)$ continuously defined, which can always be achieved since $\Rm^2$ is contractible. As an application of the Stokes theorem, we thus observe \cite{bernevig2013topological,silveirinha2015chern,hanson2016notes} that
\begin{equation}\label{eq:Stokes}
     \mC[\Pi_j] :=\frac i{2\pi}\int_{\Rm^2}dA_j(k, \theta) =  \frac{i}{2\pi}\Big(\oint_{k \rightarrow \infty} A_j(k, \theta)  - \oint_{k\rightarrow  0}  A_j(k, \theta)\Big).
\end{equation}
When $A_j(k, \theta)$ is continuously defined at $\kp=0$, then the above integral over circles with vanishingly small radii converges to $0$. However, since $\psi_j(k, \theta)$ is an eigenvector of $H(k, \theta)$ we observe that the above formula remains valid if $\psi_j(k, \theta)$ is (globally) gauge-transformed to, e.g., $e^{im\theta}\psi_j(k, \theta)$. This flexibility proves convenient in practice.

We can further simplify \eqref{eq:Stokes} by taking advantage of the isotropy of $\hat H(k, \theta)$. We know that
\[
\hat H(k, \theta) = U(\theta) \hat H(k, 0) U^*(\theta)
\]
for $U(\theta)$ a family of $\Cm^n-$ unitary transformations. This implies that the branches of spectrum $\lambda_j(k, \theta)=\lambda_j(k)$ are independent of $\theta$ and we may choose the eigenvectors as $\psi_j(k, \theta)=U(\theta)\psi_j(k, 0)$. Thus we obtain:
\pagebreak

\begin{align*}
A_j(k, \theta) = \Big(U(\theta)\psi_j(k, 0), d\big(U(\theta)\psi_j(k, 0)\big)\Big)\qquad\qquad\qquad \qquad\qquad\qquad\qquad \nonumber \\
\qquad= \Big(\psi_j(k, 0), U^*(\theta)dU(\theta)\psi_j(k, 0)\Big) + \big(\psi_j(k, 0), \frac d{dk}\psi_j(k, 0)\big)dk.\nonumber
\end{align*}

Isotropy, or invariance by rotation, implies that $U^*(\theta)dU(\theta)= U'(0)d\theta$ is independent of $\theta\in[0,2\pi)$ (which can also be verified directly from \eqref{eq:U}). In this setting, we thus obtain, since the integrals in \eqref{eq:Stokes} are taken on constant $k$ curves, that
\begin{equation}\label{eq:simpChern}
    \mC[\Pi_j] = i\big[ \lim_{k\to\infty} (\psi_j(k, 0),U'(0)\psi_j(k, 0)) - \lim_{k\to0} (\psi_j(k, 0),U'(0)\psi_j(k, 0)) \big].
\end{equation}
In other words, all we need to compute is $\psi_j(k, 0)$ with $k\to0$ and $k\to\infty$, which may be obtained analytically, to obtain ${\rm BDI}(u^S, u^N, \gamma) = \fC[\Pi^S, \Pi^N] = \mC[\Pi^S]-\mC[\Pi^N] = \sum_{j\le m} \left(\mC[\Pi_j^S]-\mC[\Pi_j^N]\right)$ by the (non-trivial) additivity of Chern numbers. Explicit expressions for the projectors as $k\to\infty$ are also necessary in order to verify gluing conditions recalled below in \eqref{eq:gluing}.

\paragraph{Eigenvectors of $\hat H_B^{N/S}$ as $k \to \infty$.}
Suppose that $\psi(k, 0) = (a_1, b_1, a_2, b_2, ..., a_m, b_m)^t$ is an eigenvector of $\hat H$ with eigenvalue $\lambda$; $\hat H(k,0) \psi = \lambda \psi(k, 0)$. First consider the $2j$ and $2j-1$ lines of the eigenvalue equation:
\begin{equation}\label{eq:eig_eq_j}
\begin{cases}
a_ju_j + b_j k + \gamma b_{j-1} = \lambda a_j\\
b_j u_j + a_jk + \gamma a_{j+1} = \lambda b_j.
\end{cases}
\end{equation}
By eliminating $\lambda -u_j$ we then obtain:
\[
(b_j^2-a_j^2)k = \gamma(a_{j+1}a_j - b_jb_{j-1}).
\]
Clearly the right hand side is finite so as $k \to \infty$ we must have that $b_j^2-a_j^2 = O(1/k)$ so that $\lim_{k\to \infty} b_j = \pm\lim_{k\to\infty} a_j$. WLOG we assume $\lambda > 0$ and $\lim_{k\to \infty}a_j = \lim_{k\to \infty}b_j = c_j$ (if $b_j = -a_j$ then $\lambda <0$ as $k \to \infty$). Now the first line of \eqref{eq:eig_eq_j} can be rearranged to get:
\begin{equation}\label{eq:cj}
c_{j-1} = c_j\lim_{k\to \infty}\frac {\lambda-(u_j+k)}\gamma.
\end{equation}
Assuming that $\lambda = k  + O(1)$ so that the right hand side can be finite, we have a system of equations that can be solved for $c_j$'s and $\lambda$. However, an explicit expression for these eigenvectors (particularly for arbitrary $m$) requires considerable additional algebra which we would like to avoid and is in fact unnecessary. We have now shown that eigenvectors associated to positive eigenvalues have the form:
\[
\psi_j^+ = \sum_{j = 1}^m c_j \phi_j, \qquad
\phi_j = \frac 1{\sqrt{2}}(\hat e_{2j-1} + \hat e_{2j}).
\]
From Theorem 2.2 we know that there must be $m$ positive eigenvalues for $k\ne 0$, and furthermore since $\hat H$ is Hermitian $\{\psi_j^+\}_{j = 1}^m$ is an orthonormal set. Now clearly from the above expression for $\psi_j^+$, span$\{\phi_j\}_{j = 1}^m$ = span$\{\psi_j^+\}_{j = 1}^m$. Therefore setting:
\[
\Pi_j^h = |\psi_j^h\rangle\langle\psi_j^{h}|,\qquad \tilde \Pi_j^h = |\phi_j^h\rangle\langle \phi_j^h|
\]
we have:
\[
I-\Pi^h = \sum_{j\ge m+1}\Pi_j^h = \sum_{j\ge m+1}\tilde \Pi_j^h.
\]
Since $\fC[\Pi^S, \Pi^N] =  \mC[\Pi^S] -\mC[\Pi^N]$, the BDI is independent of the choice of basis in which we choose to express $\Pi^h$. In other words, since we have found that $\text{Ran}(\Pi^h)$ has an orthonormal basis $\{\phi_j\}_{j = 1}^m$ we are free to use $\phi_j$ in place of $\psi_j^+$ in \eqref{eq:simpChern} provided we are summing over $j = \{1, ..., m\}$ to obtain $\mC[\Pi^S], \mC[\Pi^N]$.

If we are interested in explicitly verifying gluing conditions \eqref{eq:gluing} we may consider an alternate approach. Instead consider replacing $k\sigma_1$ in the $j$th 2x2 diagonal block with $\alpha_jk\sigma_1$, where $\alpha_j$'s are all distinct. Equations \eqref{eq:eig_eq_j}\eqref{eq:cj} are correspondingly modified by $k\to \alpha_jk$. Then if $\lim_{k\to\infty} \lambda -(u_j +\alpha_jk)$ is finite for some $j$, it is necessarily infinite for all $i\ne j$ since $\lim_{k\to \infty} (\alpha_j-\alpha_i)k = \pm \infty$. Therefore from \eqref{eq:cj} the only possible eigenvalues and eigenvectors are asymptotically as $k\to \infty$:
\[
\lambda_j^\pm \approx u_j\pm\alpha_jk, \qquad \psi_j^\pm = \frac 1{\sqrt{2}}(\hat e_{2j-1} \pm \hat e_{2j}).
\]

 It remains to justify the replacement of each $k\sigma_1$ by $[0,1]\ni t\mapsto((1-t)+t\alpha_n)k\sigma_1$, which is continuous in $t$. For each value of $t$, the corresponding interface Hamiltonian satisfies [H1] in \cite{bal2022topological,quinn2024approximations} so that the BDI is given by the Fedosov-H\"ormander formula \cite{bal2022topological} (Theorem 4.13), an integer-valued topological winding number that is clearly independent of $t\in[0,1]$\cite{bal2023topological}. These expressions allow us verify gluing conditions \eqref{eq:gluing} without finding eigenvectors which in fact rely non-trivially on $(u, \gamma)$, although solving \eqref{eq:cj} explicitly is also possible to verify \eqref{eq:gluing}, as done in \cite{bal2023mathematical} for the 2-layer case.

Regardless of the approach taken, we have shown that the subspace spanned by the positive (or negative) eigenvectors as $k\to \infty$ does not depend on the parameters $(\gamma, u)$, and therefore the term $\lim_{k\to \infty} (\psi(k, 0), U'(0)\psi(k, 0))$ in \eqref{eq:simpChern} after summing over $\sum_{j\ge m+1}\mC[\Pi_j^h]$ or $\sum_{j\le m}\mC[\Pi_j^h]$ is identical for any phase (for the latter simply repeat the preceding arguments with $\lambda < 0$). In particular denoting $\sum_{j\le m}\mC[\Pi_j^h] = c_\infty$ we may add a global gauge transformation $e^{-ic_\infty \theta}$ to the eigenvectors $\psi_j$ so that the contribution of the $k\to \infty$ terms to $\mC[\Pi^h]$ is in fact 0 for all phases.
\paragraph{Eigenvectors of $\hat H_B^{N/S}$ as $k\to 0$.} For $k = 0$ we note the following eigenvectors:
\[
E_0 = u_1 = -\frac u2, 
\;\;
\psi_0(0, 0) = (1, 0, 0, ..., 0)^t
\]
\[
E_m = u_m = \frac u2, 
\;\;
\psi_m(0, 0) = (0, 0, 0, ..., 0, 1)^t
\]
\[
E_n^\pm = \frac{u}{m-1}\Big[\big(n-\frac m2\big)\pm \delta(\gamma, u)\Big], \;\;\;
\psi_n^+(0, 0) = c\hat e_{2n} + s\hat e_{2n+1},\;\;\;\psi_n^-(0, 0) = c\hat e_{2n+1} -s\hat e_{2n},
\]
for $ 1\le n \le m-1$ and $c^2 + s^2 =1$. We obtain from \eqref{eq:U}:
\begin{equation}\label{eq:R}
    U'(0) =  i \;{\rm Diag}(0,1,1,2,2,\ldots,m-1,m-1,m).
\end{equation} 
Note that for $0 \le n\le m$ we have that:
\[
-i\lim_{k\to 0}\big(\psi_n^\pm(k, 0), U'(0)\psi_n^\pm(k, 0 )\big) = n.
\]

\paragraph{Computation of BDI:}Recall from \eqref{eq:phase_def} that $\hat H_B(\bk, \gamma, u)$ is in phase $j$ when $\delta_{j-1} < \delta(\gamma, u) < \delta_j$. Denote $\Pi^{(j)}$ by \eqref{eq:proj} with $\hat H_B(\bk)$ in phase $j$. First, we compute $\mC[\Pi^{(\pm1)}]$ for $m$ even, then extend to arbitrary phase $j$ and odd $m$. For $m$ even $\delta_1 = 1$ so that in phase 1 $E_n^\pm < 0$ for $0 \le n \le \frac m2 -1$ and $E_{m/2}^- < 0$ since $0< \delta(\gamma, u) < 1$. Therefore we obtain:
\[
 \mC[\Pi^{(1)}] =-i\lim_{k\to 0}\sum_{E_n^\pm < 0}\big(\psi_n^\pm(k, 0), U'(0)\psi_n^\pm(k, 0 )\big) = 2\sum_{n = 1}^{\frac m2-1} n + \frac m2 = 
\frac m2\big(\frac m2-1\big) + \frac m2 = \frac {m^2}4.
\]
For phase -1 the results are identical except that $u<0$ so that $E_n^\pm < 0$ for $\frac m2+1 \le n \le m$ and $E_{\frac m2}^+ < 0$ which gives:
\[
 \mC[\Pi^{(-1)}]=-i\lim_{k\to 0}\sum_{E_n^\pm < 0}\big(\psi_n^\pm(k, 0), U'(0)\psi_n^\pm(k, 0)\big) = \frac m2 + 2\Big(\sum_{n = \frac m2+1}^{m-1} n\Big) + m = \frac{3m^2}{4}. 
\]
In order to enforce the symmetry $\mC[\Pi^{(j)}] = -\mC[\Pi^{(-j)}]$ (which we may do WLOG since our final result involves differences of $\mC[\Pi^{(j)}]$'s) we apply another global gauge transformation of $\psi_j(k, \theta) \to e^{-i\frac {m^2}4\theta}\psi_j(k, \theta)$ so that:
\[
\mC[\Pi^{(\pm1)}] = \mp \frac {m^2}{4}.
\]
Now notice that if $\delta_{j-1} < \delta(\gamma, u) < \delta_{j}$ then for $u > 0$, $E_{\frac m2 + l}^- <0$ and $E_{\frac m2 - l}^+ > 0$ for $1\le l \le j-1$. Therefore for $j>1$ we can obtain $\mC[\Pi^{(j)}]$ by subtracting the $E^+_{\frac m2-l}$ and adding the $E_{\frac m2 + l}^-$ curvature terms from $\mC[\Pi^{(1)}]$ for $1\le l\le j-1$:
\[
\mC[\Pi^{(j)}] = \mC[\Pi^{(1)}] +\sum_{l = 1}^{j-1}\left( \left(\frac m2 +l\right)-\left(\frac m2 -l\right)\right) = -\frac {m^2}4 + 2\sum_{l = 1}^{j-1}l = -\frac {m^2}4 + \delta_j(\delta_j-1).
\]
Similarly for $u < 0$ we get $E_{\frac m2 + l}^- >0$ and $E_{\frac m2 - l}^+ < 0$ for $1\le l \le j-1$ so that for $j<1$:
\[
\mC[\Pi^{(j)}] = \mC[\Pi^{(-1)}] -\sum_{l = 1}^{j-1}\left( \left(\frac m2 +l\right)-\left(\frac m2 -l\right)\right) = \frac {m^2}4 - \delta_j(\delta_j-1) = -\mC[\Pi^{(-j)}].
\]
Performing similar calculations with $m$ odd we find, with a global gauge transformation of $e^{-i(\frac {m^2}2 -m)\theta}$, that:
\[
\mC[\Pi^{(1)}] = -\mC[\Pi^{(-1)}] = -\frac {m^2}4 + \frac 34 = -\frac {m^2}4 + \delta_1(\delta_1-1)
\]
and subsequently for $j >1$:
\begin{align*}
    \mC[\Pi^{(j)}] = -\mC[\Pi^{(-j)}] = -\frac {m^2}4 + \frac 34 + \sum_{l = 1}^{j-1}\left(\left((\lceil \frac m2\rceil +l\right)-\left(\lfloor \frac m2\rfloor -l\right)\right) \qquad\qquad \nonumber\\\qquad
    = -\frac {m^2}4 + (j^2-\frac 14) = -\frac {m^2}4 + \delta_j(\delta_j-1). \nonumber 
\end{align*}
Thus for all $m$ and all phases $j \in \{\pm1, \pm2, ..., \pm \lfloor \frac m2\rfloor\}$ we get:
\begin{equation}\label{eq:bulk_inv}
\mC[\Pi^{(j)}] = -\text{sgn}(j)\left(\frac {m^2}{4}-\delta_j(\delta_j-1)\right) 
\end{equation}
Therefore noting that BDI$_{kj} = \fC[\Pi^{(k)}, \Pi^{(j)}] = \mC[\Pi^{(k)}]-\mC[\Pi^{(j)}]$ we obtain \eqref{eq:BDIj} by substituting \eqref{eq:bulk_inv} into \eqref{eq:ChernBDI}.

\section{Numerical simulations}
\label{sec:num}

We illustrate the theoretical results presented in Theorems \ref{thm:gap} and \ref{thm:class} by a number of numerical simulations. 

\paragraph{Bulk spectrum simulations.}
We start with cross sections of the rotationally invariant bulk spectrum of the 5-layer system \eqref{eq:HB} (involving diagonalization of $10\times10$ matrices performed in MATLAB) in Figure \ref{fig:5-layer_bulk} for various values of $u$ assuming a fixed coupling constant normalized to $\gamma = 1$. For the low-potential (relative to $\gamma$) case we recover the band structure noted in \cite{zhang2011spontaneous} for tri-layers which exhibits two minima (maxima) in the bands closest to $E = 0$. In the high-potential case, we recover the pattern discussed in \cite{bal2022multiscale} where for $m$ layers, $m$ minima (maxima) are observed in the bands closest to $E = 0$ and the distance from $E = 0$ scales as $(\gamma/u)^{2l}$ for $l = \{1, ..., \lceil m/2\rceil\}$. For intermediate values of $u$, we observe an intermediate number of minima close to $E = 0$ which varies from 2 to $m$ as $u$ increases. Importantly, the size of the global band gap scales  proportional to $(\gamma/u)^{2m}$ for $u$ sufficiently large as described in \cite{bal2022multiscale} so that the global gap can be arbitrarily small as $u, m \to \infty$, even if the gap is guaranteed to exist by Theorem \ref{thm:gap}. This is illustrated in Figure \ref{fig:5-layer_bulk}(d) where the gap is barely discernible for the highest $|\bk|$ at which a minimum occurs.

The theory in \cite{bal2022multiscale} ensures that the smallest spectral gap appears at the large values of $|\bk|$, which is somewhat reassuring given that high-wavenumber oscillations may be able to be disregarded in real systems. However, for some parameter regimes close to critical values, we find that arbitrarily small spectral gaps can also be found for a range of intermediate $|\bk|$ which are not too large nor exceedingly close to $|\bk| = 0$, where we know that band crossings occur according to \eqref{eq:ucrit}. In Figure \ref{fig:9-layer} we demonstrate this for $m = 9$, where a spectral gap of $\approx10^{-3}$ is seen at $|\bk| \approx \pm 0.32$, creating an obstacle to observing edge states experimentally for higher values of $m$. This finding may also call into question whether the topological properties of the system are robust to perturbations due to the fact that our analysis relies heavily on a global gap which could be closed by exceedingly small perturbative effects at these parameter values. 
\begin{figure}[b!]
    \centering
    \includegraphics[width = .8\textwidth]{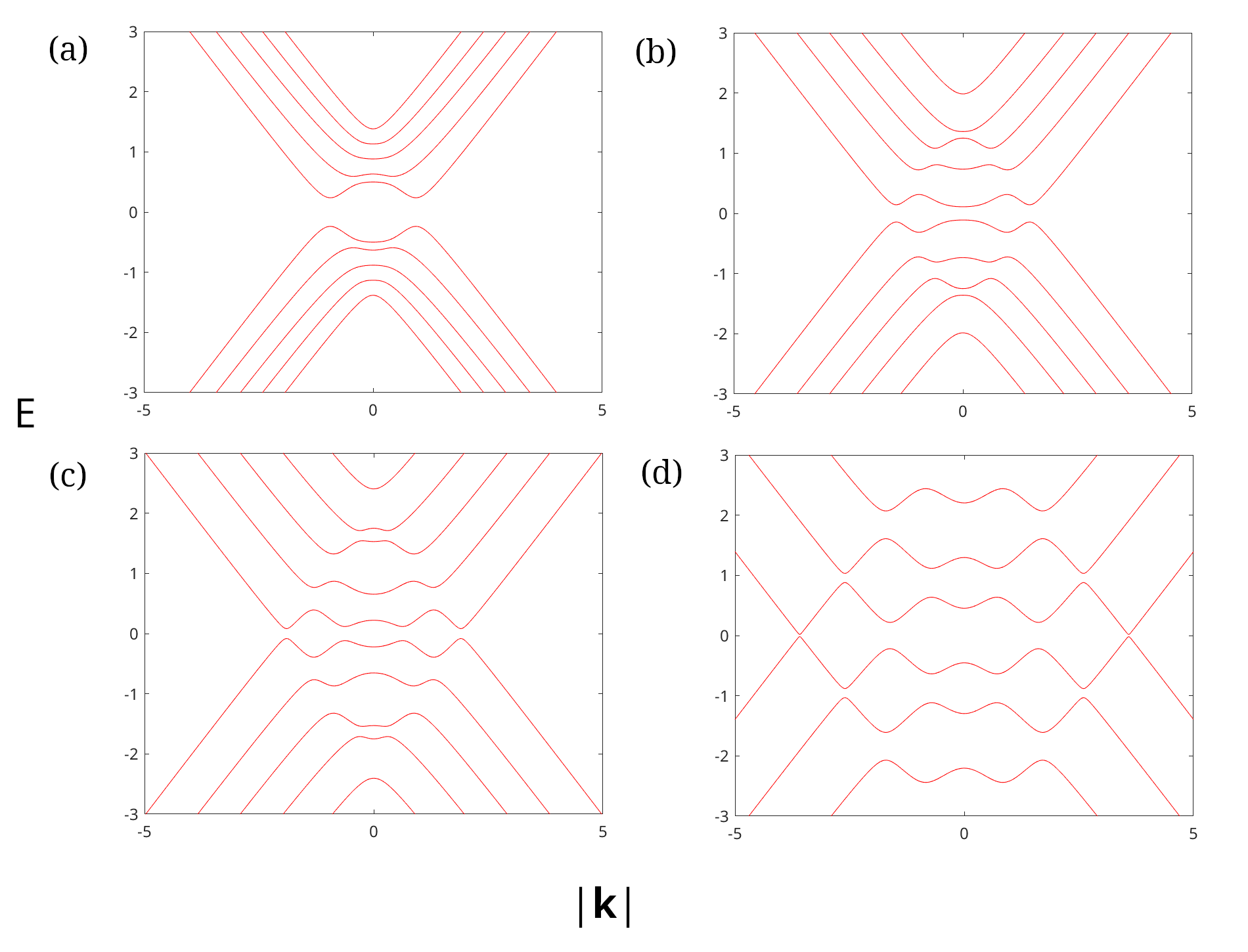}
    \caption{Bulk spectrum for 5-layer model with $\gamma = 1$ and $u = (1, 2.5, 3.5, 7)$ for (a), (b), (c), and (d) respectively. For the low-potential case we recover the ``Mexican hat" shape \cite{min2008electronic, zhang2010band} for the bands nearest $E = 0$ and for the high-potential case we note 5 minima (maxima) in the positive (negative) band closest to $E = 0$\cite{bal2022multiscale}. Intermediate values of $u$ produce an intermediate number of minima.}
    \label{fig:5-layer_bulk}
\end{figure}
\begin{figure}[t!]
    \centering
    \includegraphics[width = \textwidth]{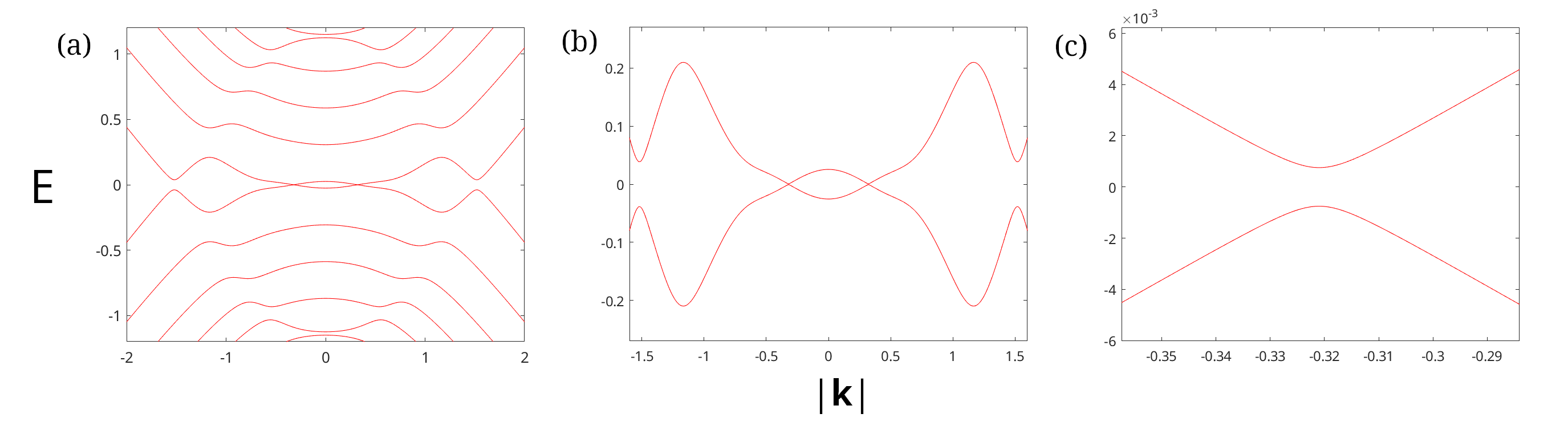}
    \caption{(a) Bulk spectrum for $m = 9$ layers. As number of layers increases the global band gap near phase transitions becomes increasingly small even for intermediate values of $|\bk|$. (b) shows the branch closest to $E = 0$ and (c) shows detail of the minimum in said branch closest to $k_x = 0$. Spectrum shown is for $(\gamma, u) = (1, 2.25)$.}
    \label{fig:9-layer}
\end{figure}

\paragraph{Edge spectrum simulations.}
In order to calculate the spectrum of interface Hamiltonians numerically, we assume that $H$ is invariant in the x-direction and $u(y)$ varies between $u^S$ and $u^N$ in the region $-R\le y\le R$ as described above. Taking the Fourier transform in $(x, t)\to (k_x, E)$ under these assumptions gives an interface Hamiltonian:
\[
 H_I(k_x) = \begin{pmatrix} u_1(y)+ k_x\sigma_1 -i\sigma_2\partial_y& B & 0 & \ldots & 0 \\
   B^* & u_2(y) + k_x\sigma_1-i\sigma_2\partial_y & B & \ddots &  \vdots \\
   0 & B^* & \ddots  & \ddots & 0 \\ 
   \vdots  & \ddots & \ddots & \ddots & B\\
   0 & \ldots & 0 & B^* & u_m(y)+ k_x\sigma_1-i\sigma_2\partial_y
   \end{pmatrix}
\]
and the eigenvalue problem in one dimension
\[
H_I(k_x)\psi(y) = E\psi(y).
\]
Now denote $S_2 = (I_m\otimes\sigma_2)$ and $C(k_x) = H_I+iS_2\partial_y$. Then the eigenvalue problem above is equivalent to:
\begin{equation}\label{eq:ode}
-iS_2\big(E-C(k_x)\big)\psi(y) =: A(k_x, E)\psi(y)= \partial_y \psi(y).
\end{equation}
Assuming that for large enough $|y|$, the coefficients of $A(k_x, E)$ are constant and given by the constant-coefficient operator $A^{N/S}(k_x, E)$ for $y\geq R$ and $y \leq -R$ respectively, then for any fixed $(k_x, E)$ we can determine the bulk modes by simply diagonalizing $A^{N}, A^S$. In general, however, an explicit solution for $(\psi(y),E)$ is not feasible analytically for the ODE \eqref{eq:ode}. A numerical ODE solver is used to solve \eqref{eq:ode} for $\psi(y=0)$ by using the eigenvectors of $A^N$ and $A^S$ as initial conditions at $y = \pm R$ respectively and numerically solving for $\psi(y)$ for $-R< y < 0$ and $0 < y < R$ respectively. Exponentially increasing eigenvectors from the bulk spectrum of $A^N$ and $A^S$ are eliminated as non-physical. If the subspaces of valid (non-exponentially increasing) eigenvectors of $A^N, A^S$ intersect non-trivially once evaluated at $y = \pm 0$ then there exists a valid solution of \eqref{eq:ode}. More precisely, orthogonal projectors $P^N, P^S$ onto the subspaces of valid eigenvectors of $A^N, A^S$ evaluated at $y = 0$ are formed, and a $(k_x, E)$ pair is accepted if the largest eigenvalue of $P^NP^S$ is within a small tolerance of 1 ($<10^{-4}$ for Figures 3-6). When compared with widely-used finite difference methods \cite{fu2021topological, bal2024topological, bal2022multiscale} this method has no need for periodization of the domain and therefore does not require the heuristic elimination of modes \cite{bal2024topological, bal2022multiscale} or alternatively the necessity of modeling two equal but opposite transitions \cite{fu2021topological, zhang2011spontaneous}. This method shares these advantages with two previously-proposed approaches\cite{colbrook2019compute, colbrook2023computing} but may also be applied to continuum models as well as periodic lattice models. Comparison with finite-difference methods (not displayed here) does however show that the two methods agree using a fine enough mesh with finite differences where spurious modes are eliminated.

Figure \ref{fig:5-layer_phase} validates the values of the BDI given in Theorem \ref{thm:class} for a symmetric transition $u^N = -u^S > 0$ when $3\leq m\leq 6$ layers and for all possible values of $\delta_j$. In this case BDI$_{-j,j} = \frac {m^2}2-2\delta_j(\delta_j-1)$. As shown in Theorem \ref{thm:class} and Fig. \ref{fig:predict}, $m =3$ demonstrates only one distinct topological phase transition. For $m = 4, 5, 6$ we derive from \eqref{eq:ucrit} the critical values $u_c = 2\sqrt{2} \gamma$ ($m = 5$), $u_c = 2\sqrt{3} \gamma$ ($m = 4$), and $u_c = (10\gamma/\sqrt{3}, 10\gamma/\sqrt{15})$ ($m = 6$). Figure \ref{fig:5-layer_phase} contains numerically calculated spectra for all symmetric transitions between phases divided by these critical values assuming $\gamma = 1$. For the asymmetric case the number of possible transitions increases proportional to $m^2$, however in Figure \ref{fig:5-layer_asym} we show the remaining possible transitions for $m = 5$ between phases +1 and +2 and -1 and +2 respectively, which produce BDI's of 3 and 8. We also validate in Figure \ref{fig:5-layer_trans} using 5 layers that band crossings only occur at $k = 0$ and show the transition from 5 to 11 edge modes at the critical value $u = 2\sqrt{2}\gamma$. All results are in perfect agreement with the results of Theorems \ref{thm:gap} and \ref{thm:class}.

\begin{figure}[t!]
    \centering
    \includegraphics[width = \textwidth]{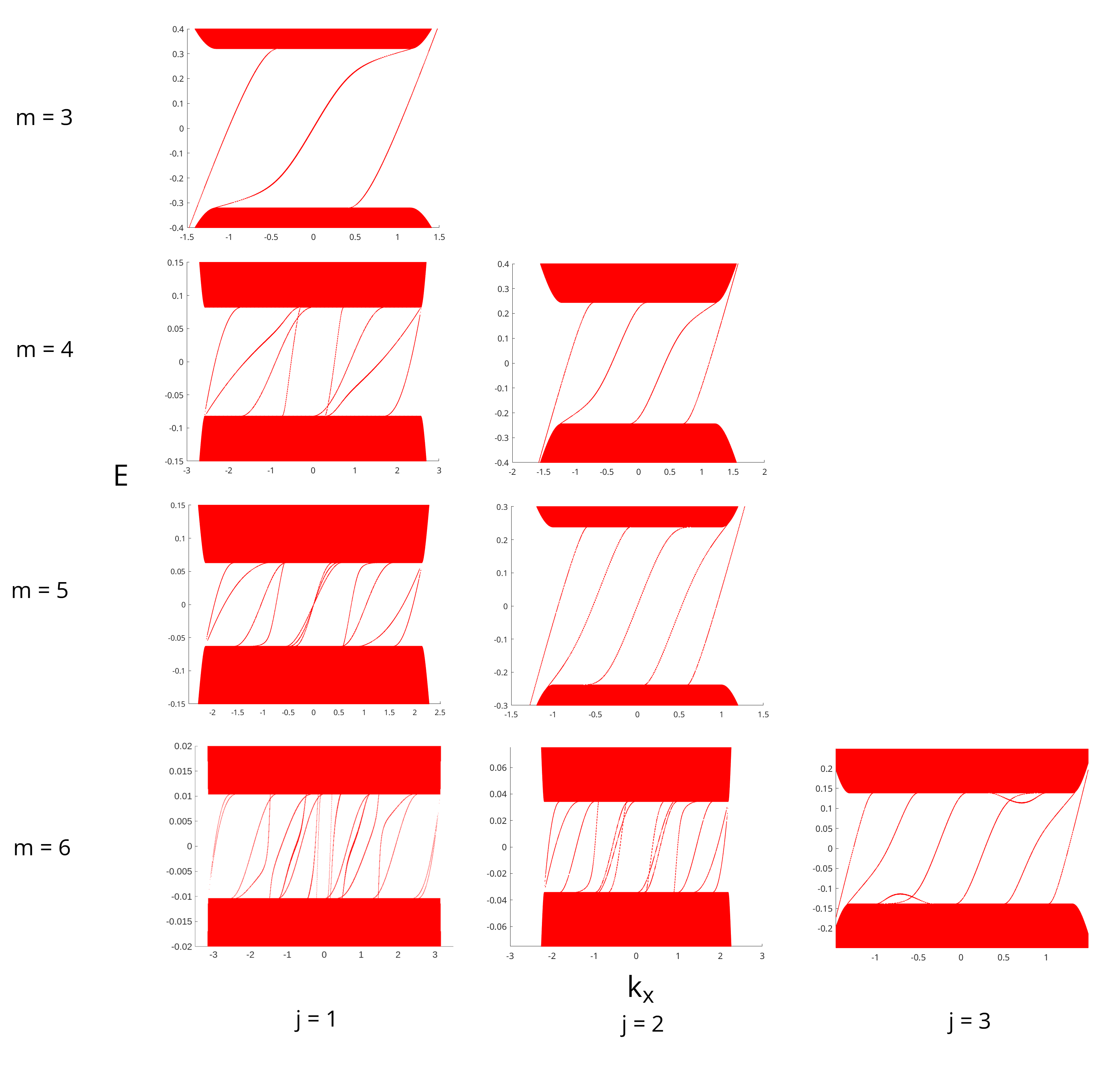}
    \caption{Numerically calculated spectra for $m = \{3, 4, 5, 6\}$ for symmetric transitions between the two phases $-j$ and $j$ for $j\in \{1, 2, 3\}$. We assume $\gamma = 1$ for simplicity and calculate the spectrum of $H_I$ for $u = 2$ $(m = 3)$, u = (5, 2) ($m = 4$), $u = (4, 1.2)$ ($m = 5$), and $u = (6, 4, 2)$ ($m = 6$). The approximate critical values from \eqref{eq:ucrit} are $u_c \approx 3.46$ (m = 4), $u_c \approx 2.83$ (m = 5), and $u_c \approx (2.58, 5.77)$ ($m = 6$). Note that BDI's for symmetric transitions are denoted by $^\dag$ in Figure \ref{fig:predict}. The number of edge states agrees with Figure \ref{fig:predict} for $m = 3, 4, 5, 6$ and all respective phases.}
    \label{fig:5-layer_phase}
\end{figure}

\begin{figure}[t!]
    \centering
    \includegraphics[width = \textwidth]{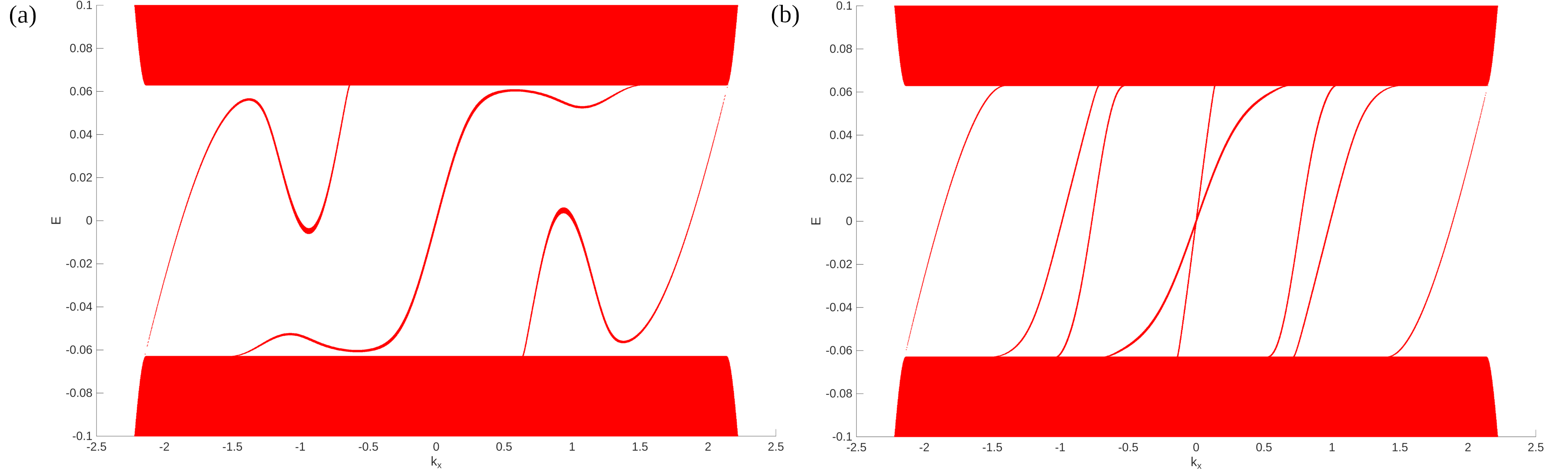}
    \caption{Illustration of the asymmetric transitions in $u$ for $m = 5$ layers. (a) shows a transition from phase +1 to +2 where $u$ varies from $u^S = 2$ to $u^N = 4$. (b) shows a transition from phase -1 to phase 2 where $u$ varies from $u^S = -2$ to $u^N = 4$. Both transitions are for $\gamma = 1$, giving a critical value of $u_c = 2.83$ as in the previous figure.}
    \label{fig:5-layer_asym}
\end{figure}

\begin{figure}[t!]
    \centering
    \includegraphics[width = \textwidth]{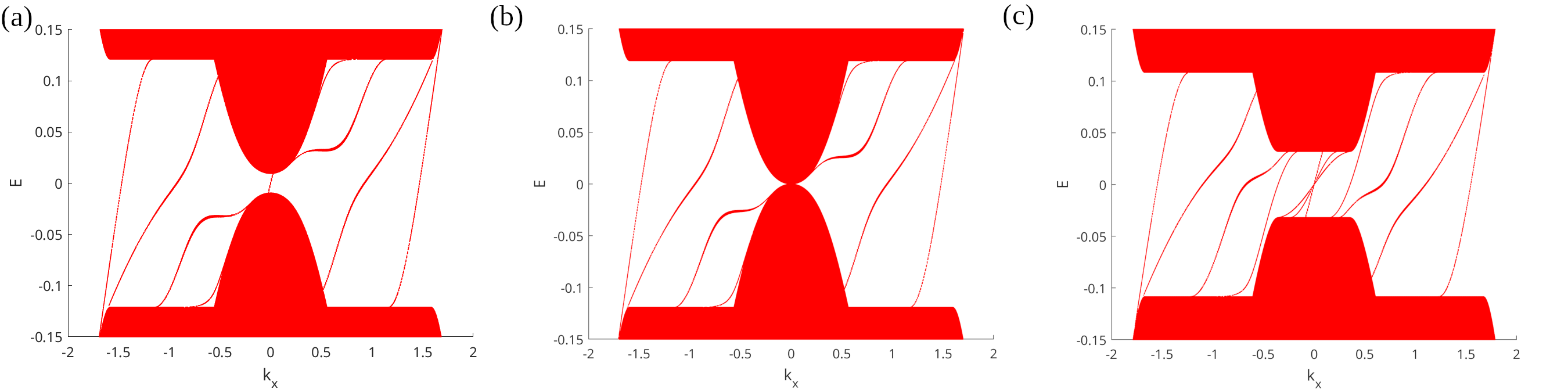}
    \caption{Illustration of the transition through the critical value $u =2\sqrt{2}\gamma$ for five-layer ABC-stacked graphene (m = 5). Potential differences of $u = (2.8, 2 \sqrt{2} \approx 2.83, 3)$ are shown respectively for (a), (b), (c), with $\gamma = 1$.}
    \label{fig:5-layer_trans}
\end{figure}

\begin{figure}[t!]
    \centering
    \includegraphics[width = .75\textwidth]{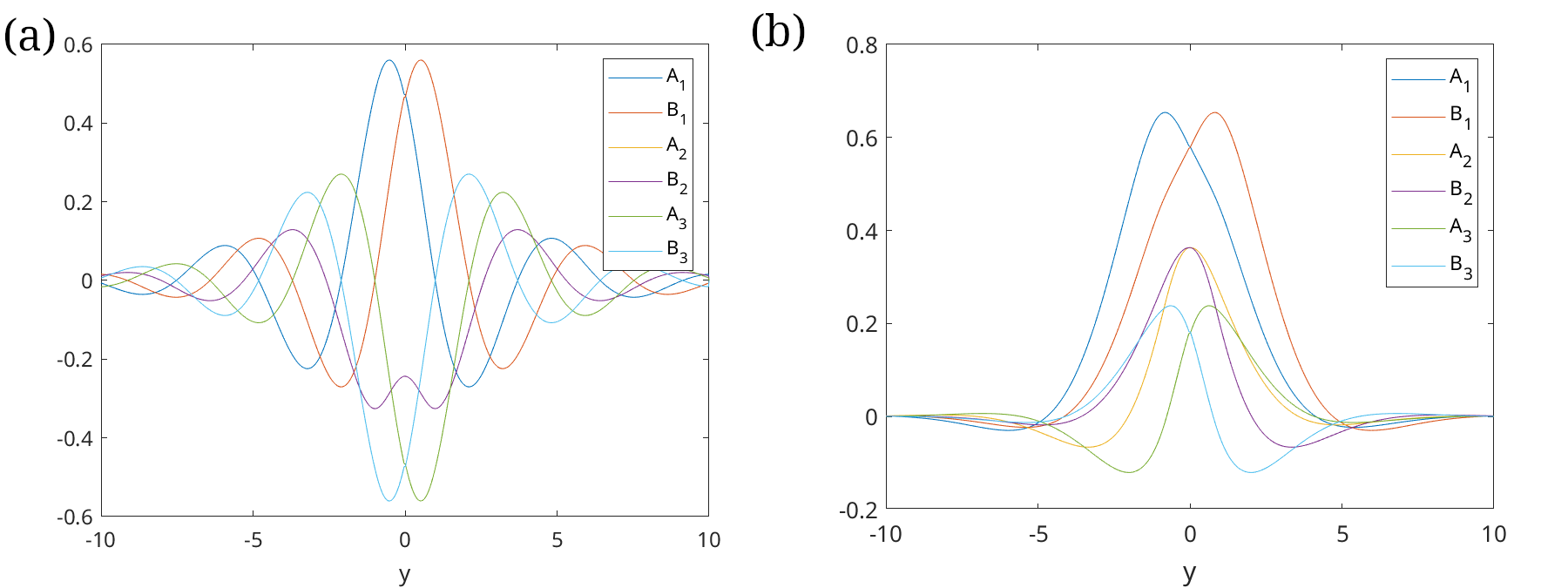}
    \caption{Eigenvectors of edge states in the 3-layer case ($(u, \gamma) = (2, 1)$) at $E = 0$. (a) is the eigenvector as a function of $y$ at $(k_x, E) = (0, 0)$ and (b) at $(k_x, E) = (-1, 0)$. The edge is located within $-1 < y < 1$. $A_n$ and $B_n$ labels correspond to the $A$ and $B$ sub-lattice components of layer $n$ following the notation of \cite{zhang2010band}. Polarization is seen biased toward the outer layers.}
    \label{fig:eigs}
\end{figure}

One particularly interesting feature shown in Figure \ref{fig:5-layer_phase} and \ref{fig:5-layer_asym} is the flattening of the bulk spectrum above and below the band gap. This feature can in fact be straightforwardly deduced from the rotational invariance of the bulk Hamiltonian. Note that Figure \ref{fig:5-layer_bulk} shows just a cross-section in $|\bk|$ of the bulk spectrum. First, fix any $E$ which is not in the bulk band gap, and denote $k_{max} = \arg\max_{|\bk|} \{E_j(\bk) = E, j\in \{1, ..., 2m\}\}$. Then clearly for any $k_x$ with $|k_x|\le k_{max}$ we can find a $\bk$ with $|\bk| = \sqrt{k_x^2+k_y^2} = k_{max}$ for some $k_y \in [0,k_{max}]$ so, due to the rotation invariance of the spectrum, there is in fact a bulk mode corresponding to any $k_x \le k_{max}$ for each $E$ not in the bulk band gap. Therefore the bulk spectrum of $H_I(k_x)$ must be flat in between any two peaks in the bulk spectrum, as we observe in Figures \ref{fig:5-layer_phase}, \ref{fig:5-layer_asym}, \ref{fig:5-layer_trans}, and \ref{fig:5-layer_eps}.

Finally, the eigenvectors for two edge modes are illustrated in Figure \ref{fig:eigs} for the 3-layer case. These eigenvectors confirm that the edge modes are concentrated around $y = 0$ and show that the polarization of these modes favors the outer layers. 
\section{Discussion}
\label{sec:conclu}
The result of Theorem \ref{thm:class} gives all possible topological values of the quantum anomalous Hall effect in an idealized macroscopic (effective) model of multi-layer rhombohedral graphene and Floquet topological insulators in a setting where the gate potential varies between two bulk regions. In the latter application, we retrieve the topological invariant obtained for large values of the driving laser frequency (corresponding to a small coupling constant $\gamma$) \cite{bal2022multiscale}. In the former application, we retrieve the topological invariant obtained for small displacement field \cite{han2024large,zhang2011spontaneous} compared to the coupling constant $\gamma$. 

We also note that the spatially varying gate potential is just one possible way to introduce non-trivial topological phases. One other possibility is the case that across a large enough length scale the energetically preferable stacking order changes continuously from ABC to CBA, as considered in \cite{bal2023mathematical}. This transition is modeled by a continuous transition of $B \to B^*$. Substituting the eigenvectors of $\hat H(u, \gamma A^*)$ for $\hat H(-u, \gamma A)$ in Section \ref{sec:proof} using the unitary relations derived in \eqref{eq:symtauA} it is clear that the BDI's for this transition are identical a symmetric transition from phase $-j$ to $j$ as derived in Section \ref{sec:proof}. Figure \ref{fig:5-layer_eps} shows for the 5-layer case that indeed these two transitions are topologically equivalent. Indeed considering that $B$ and $B^*$ are distinct topological phases for each value of $u$ the number of topological phases allowing both the transition matrix $B$ and gate potential $u$ to vary spatially is $4\lfloor\frac m2\rfloor$. Bulk invariants for all these phases may be calculated similarly to in section \ref{sec:proof}.d and subsequently subtracting any two of these will give a BDI for a transition between any two of the $4\lfloor \frac m2\rfloor$ phases. One could also allow the possibility that a non-uniform electric field produces potential differences that are not constant between any two adjacent layers or adding insulating layers between graphene layers to introduce non-uniform coupling constants $\gamma$, which is the subject of ongoing work.

\begin{figure}[t!]
    \centering
    \includegraphics[width = .8\textwidth]{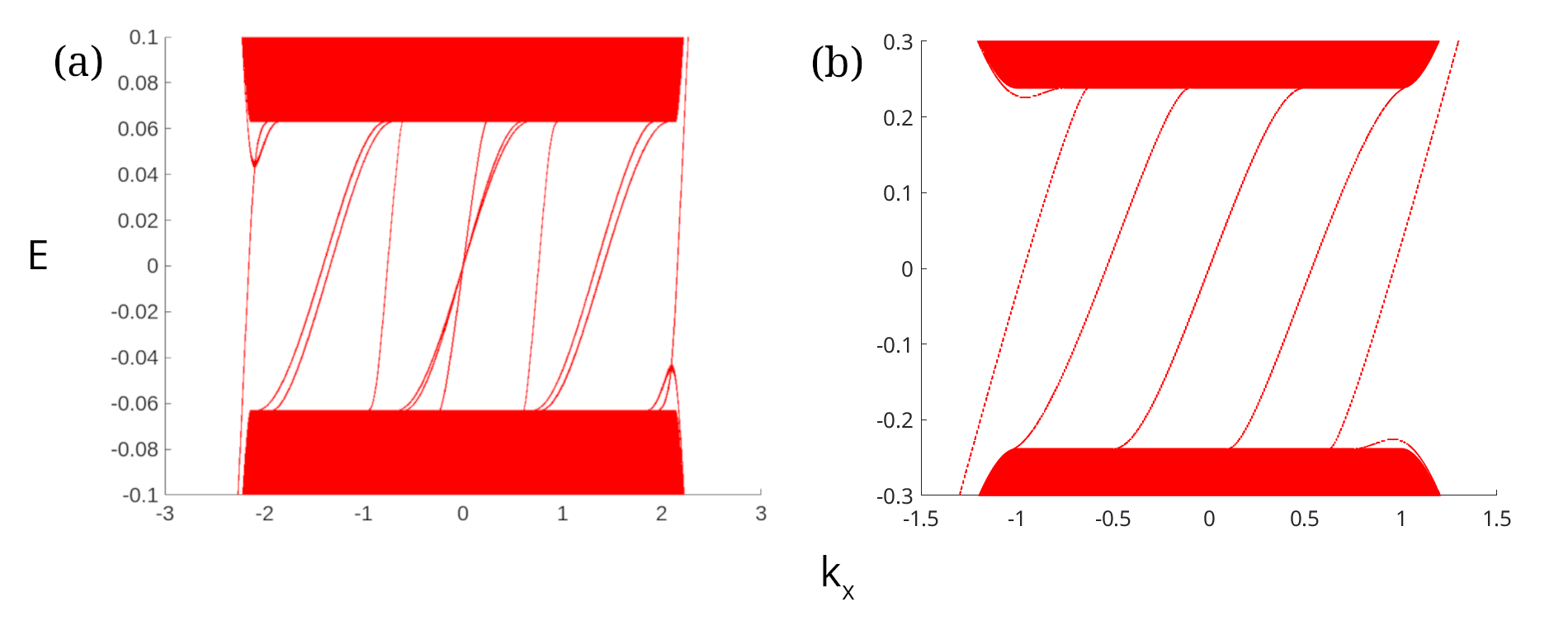}
    \caption{Edge spectrum for 5-layer case with $A\to A^*$ transition. Parameter values used are $u = (4, 1.2)$ for (a), and (b) respectively and $\gamma = 1$. Comparison with Figure \ref{fig:5-layer_phase} $(m = 5)$ shows that the $A\to A^*$ transition is topologically equivalent to the $u\to -u$ transition.}
    \label{fig:5-layer_eps}
\end{figure}

In the RHG application, we assume here a model \eqref{eq:H} that is both spin- and valley-polarized. While the obtained numbers do not depend on spin, they depend on the valley index $\tau=\pm1$. The total QAH therefore vanishes unless valley polarization may be obtained experimentally, which is achieved by means of an appropriate spin orbit coupling\cite{han2024large}. It is unclear whether such polarization may still be achieved for the large displacement fields (large values of $u$) necessary to observe a phase transition (i.e., $\delta<\delta_{\lfloor\frac m2 -1\rfloor}$).

The theoretical results, and in particular the presence of a gap at $E=0$, depend on the specific structures of $H$ in \eqref{eq:H}. More general interlayer couplings such as those proposed in \cite{han2024large,zhang2011spontaneous} do not modify the elliptic nature of the operators and the bulk-edge correspondence still applies\cite{bal2023topological,quinn2024approximations}. Therefore, any continuous deformation of the model in \eqref{eq:H} with bulk phases that does not close the gap at $E=0$ will generate an edge asymmetry for which the results and explicit indices of Theorem \ref{thm:class} apply. While the results of Theorem \ref{thm:gap} guarantee that gap crossings may only occur at $E=0$, we found examples of values of $(\gamma,u)$ such that the spectral gap at values $k\not=0$ may be significantly smaller than the gap at $k=0$; see for instance Fig. \ref{fig:9-layer}. When considering interactions between more distant layers, as in \cite{zhang2010band}, rotational symmetry which has been essential to our analysis here breaks down and the perturbative effects of further off-diagonal elements have yet to be explored. It is therefore unclear whether such gaps persist except for very small perturbations of the model \eqref{eq:H}. 

In summary, we have found all the topological phases and bulk difference invariants of an effective model of coupled 2-d Dirac systems which has applications in gated ABC-stacked rhombohedral graphene with an arbitrary number of layers \cite{zhang2010band, min2008electronic, zhang2011spontaneous} and Floquet topological insulators \cite{bal2022multiscale}. Due to the ellipticity of the problem the bulk-edge correspondence holds \cite{bal2023topological, bal2022topological} and we verify by numerical spectral calculations the number of predicted edge modes for all topological phases in the $m = \{3, 4, 5, 6\}$ cases. For $u < \gamma$ we recover the results of \cite{zhang2011spontaneous} for the QAH effect in RHG and for $u \gg \gamma$ the results of \cite{bal2022multiscale} for Floquet topological insulators. Our results show that for an $m$-layer model with a spatially varying gate potential there are in fact $2\lfloor\frac m2\rfloor$ topological phases and the asymmetric charge transport at an interface between any two of these phases is quantized by the BDI in \eqref{eq:BDIj}. Although it remains to be seen if these phases are robust to perturbations in interactions beyond the nearest-layer interactions we consider here, these results show the exciting possibility of tunable QAH states in multi-layer ABC stacked graphene should experimental conditions allow high enough gate potentials to be achieved.

\section*{Acknowledgments} 
The authors acknowledge stimulating discussions with Allan MacDonald on graphene systems and Jeremy Hoskins for invaluable advice and input regarding numerical techniques. This work was funded in part by NSF grant DMS-230641 and ONR grant N00014-26-1-2017.

\section*{Author Declarations}
The authors have no conflicts to disclose.

\section*{Data Availability}
The code needed to reproduce the data in section \ref{sec:num} is available at:\\ https://doi.org/10.5281/zenodo.17156250.
\appendix
\section{Classification of elliptic Hamiltonians}\label{sec:app}
\medskip
\paragraph{Classification of Elliptic Operators.}

Consider an interface Hamiltonian $H_I$ with matrix-valued symbol $a(\bx,\kp):\Rm^{4}\to \Cm^{2m\times 2m}$ in the Weyl quantization, i.e., 
\begin{equation}\label{eq:weylquantization}
  H_I f(\bx)= (\ow a) f(\bx)  := \dint_{\Rm^{4}} \dfrac{e^{i(\bx-\by) \cdot\kp}}{(2\pi)^2}  a(\frac{\bx+\by}2,\kp) f(\by)  d\kp d\by,
\end{equation}
with $a(\bx,\kp)$ satisfying hypothesis [H1] in \cite{quinn2024approximations,bal2022topological} of order $\langle \bx\rangle$ (of order $m = 1$ in the notation of \cite{quinn2024approximations,bal2022topological}), i.e. such that the minimum singular value of $a(\bx,\kp)$, $a_{min}(\bx, \kp)$ is bounded below according to $a_{min}(\bx, \kp)\ge C|\kp|-1$ for some $C> 0$ and such that $a(\bx,\kp)=a^N(\kp)$ for $y\geq R>0$ while $a(\bx,\kp)=a^S(\kp)$ for $y\leq -R$. As such $H_I$ is an unbounded operator whose domain must be $H^1(\Rm^2; \Cm^{2m})$ or a subset thereof. Fixing $H^1(\Rm^2; \Cm^{2m})$ as the domain of $H_I$ ensures self-adjointness in the main part of the paper. We further assume that $a^{N/S}(\kp)$ are gapped for all $\kp\in\Rm^2$ in the energy interval $(-E_0,E_0)$, i.e., all of their eigenvalues lie outside of this interval. We use the notation $\bx=(x,y)$ while $\kp=(k_x,k_y)$.

Let $F=F(x) \in C^\infty(\Rm)$ be a function that depends only the spatial coordinate $x$ with  $F(x)=0$ for $x<x_0-\delta$ and $F(x)=1$ for $x>x_0+\delta$ for some $x_0\in\Rm$ and $\delta>0$. The function $F(x)$ should be interpreted as the observable quantifying the field density in the (right) half-space $x\geq x_0$. Let $0\leq\varphi\in C^\infty(\Rm)$ be a function such that $\varphi(E)=0$ for $E\leq -E_0$ and $\varphi(E)=1$ for $E\geq E_0$. Thus $\varphi'(H_I)$ defines a density of states that cannot propagate into the N and S insulating (for such energies) bulks. 

The operator $i[H_I,F]$ may be interpreted as a current operator of excitations crossing the vertical line $x=x_0$ per unit time. The expectation of this operator within the appropriate Hilbert space (in our case $H^1(\Rm^2;\Cm^{2m})$) against the density $\varphi'(H)$ is then defined as
\begin{equation}\label{eq:sigmaI}
  \sigma_I [H_I] = {\rm Tr} \ i[H_I,F] \varphi'(H_I),
\end{equation}
Tr denoting the trace in the appropriate Hilbert space. This assumes that $i[H_I,F] \varphi'(H_I)$ is a trace-class operator, which is indeed the case whenever [H1] holds \cite{bal2022topological,bal2024topological,quinn2024approximations}. Note however that if $\varphi'$ is supported on any interval containing bulk spectrum, $\sigma_I[H_I]$ is clearly not trace-class so that $\sigma_I$ necessarily characterizes a bulk band gap. This current is quantized: $2\pi\sigma_I[H_I]\in\Zm$, reflecting the topological nature of the edge states and the robustness of this current to perturbations of $a(\bx,\kp)$ satisfying condition [H1] \cite{bal2022topological,bal2024topological,quinn2024approximations}. 

The computation of such an index remains difficult in practice and typically requires a diagonalization of the operator $H_I$. An important simplification occurs when that invariant may be related to the properties of the bulk operators $H^{N/S}$ with constant coefficient symbols $a^{N/S}(\kp)$.
\medskip
\paragraph{Definition of BDI.}
Consider the  two families of self-adjoint Hamiltonians in Fourier variables $H^h(\kp)$ for $h\in\{N,S\}$ and $\kp\in\Rm^2$ with values in $\Cm^n\times\Cm^n$ and assume the following spectral decomposition
\[  H^h(\kp) = \dsum_{j=1}^n \lambda_j^h(\kp) \Pi^h_j(\kp)\]
where $\Pi^h_j(\kp)=|\psi^h_j(\kp)\rangle\langle\psi^h_j(\kp)|$ are rank-one projectors and $\lambda^h_j(\bk)$ are the corresponding eigenvalues. The spectral theorem ensures such a decomposition exists for self-adjoint operators. In the RHG application, all bands are simple and hence in fact real-analytic in $\kp$. Associated to each (arbitrary-rank) projector family $\Rm^2\ni \kp\mapsto\Pi(\kp)$ is the following integral of the associated (Berry) curvature
\begin{equation}\label{eq:curv}
    \mC[\Pi]=\frac i{2\pi} \int_{\Rm^2} {\rm tr} \Pi d\Pi \wedge d\Pi,\qquad 
d\Pi := \pdr{\Pi}{k_x}dk_x + \pdr{\Pi}{k_y}dk_y.
\end{equation}
Here ${\rm tr}$ stands for standard matrix trace. Because $\Rm^2$ is not a compact manifold, the above integral is not necessarily a (stable with respect to continuous deformations) Chern number or even necessarily integral-valued. In fact, in many applications, the above integral takes an arbitrary continuum of values (see \cite{bal2023mathematical} for a model of bilayer graphene). One reason for this fact is that the projector $\Pi$ may not have a constant value as $k\to\infty$. In such a setting, we may still be able glue two projectors, one from $H^N$ and the other one from $H^S$, by radial compactification of two Euclidean planes on the Riemann sphere \cite{bal2022topological,frazier2025topological,rossi2024topology}, and obtain a well-defined Chern number. We recall the main steps of the procedure.

Assume a spectral gap between levels $\ell$ and $\ell+1$ for both $h=N$ and $h=S$, i.e., an interval $I_\ell$ such that $\lambda^h_j(\kp)< I_\ell$ for $h\in{N,S}$ and $j\leq \ell$ while $\lambda^h_j(\kp)> I_\ell$ for $h\in{N,S}$ and $j \geq \ell+1$. In the RHG application, $\ell=m $ and $I_\ell=(-E_0,E_0)$. Associated to the spectral gap are the projectors:
\[
   P_\ell^h = \sum_{j\leq \ell} \Pi^h_j,\qquad \mW_\ell^h := \mC[P_\ell^h] = -\mC[I-P_\ell^h],\qquad I-P_\ell^h = \sum_{j\geq \ell+1} \Pi^h_j.
\]
The total curvature associated to a band may be computed as a sum either over bands below the gap or over bands above the gap. We then define the bulk-difference invariant for the gap labeled by $\ell$:
\begin{equation}\label{eq:ChernBDI2}
    \fC_\ell := \fC[P_\ell^S,P_\ell^N] := \mW_\ell^S - \mW_\ell^N  = \frac  i{2\pi} \int_{\Rm^2} {\rm tr} P_\ell^S dP_\ell^S\wedge dP_\ell^S
 - \frac  i{2\pi} \int_{\Rm^2} {\rm tr} P_\ell^N dP_\ell^N \wedge dP_\ell^N.
\end{equation}

Provided that we have the following gluing condition 
\begin{equation}\label{eq:gluing}
    \lim_{r\to\infty} P_\ell^N(r\theta) = \lim_{r\to\infty} P_\ell^S(r\theta) \quad \mbox{ for all } \theta\in \Sm^1,
\end{equation}
then $\fC_\ell$ is also a Chern number for a family of projectors defined on the sphere $\Sm^2$. Indeed, we stereographically project $P^N_\ell$ onto the upper hemisphere of $\Sm^2$ and $P^S_\ell$ onto the lower hemisphere of $\Sm^2$ while the above gluing condition ensures that the family of projectors on $\Sm^2$ is continuous across the equator. This guarantees that the {\em Bulk Difference Invariant} (BDI) $\fC_\ell\in\Zm$ \cite{bal2022topological,frazier2025topological}.

That the gluing condition \eqref{eq:gluing} holds for the RHG model \eqref{eq:weylquantization} is a consequence of ellipticity: as $|\kp|\to\infty$, the ranges of $P^N_\ell$ and $P^S_\ell$ are independent of the parameters $(\gamma,u)$ and equal. Moreover, after replacing each term $k\sigma_1$ by $\alpha_n k\sigma_1$ as was done in the proof of Theorem \ref{thm:class}, it is then straightforward to observe that the eigenvectors obtained in the limit $k\to\infty$ are independent of the displacement fields $\pm u$ and hence the gluing conditions obtained for each band separately.
\medskip
\paragraph{Bulk edge correspondence.}
We introduced two invariants in \eqref{eq:sigmaI} and \eqref{eq:ChernBDI2} above. The former is difficult to compute in general without an understanding of the spectral decomposition of $H_I$. The latter on the other hand involves a reasonably explicit integral, and as we saw in \eqref{eq:simpChern} is easily estimated when the bulk operators satisfying an invariance by rotation.

The bulk-edge correspondence is a general principle stating that the edge current asymmetry $2\pi\sigma_I$ is related to the bulk invariants by the relation
\begin{equation}\label{eq:BEC}
 2\pi\sigma_I[H_I] = {\rm BDI} = \fC[P^S_\ell,P^N_\ell]
\end{equation}
where $\ell$ is a common spectral gap of the bulk Hamiltonians $H^h$ for $h\in\{N,S\}$ and the density $\varphi'$ appearing in \eqref{eq:sigmaI} is supported in that common gap. This relation thus implies that the number of edge modes characterized by $2\pi\sigma_I$ is independent of the details of the transition between $H^S$ and $H^N$. For operators satisfying [H1], which holds for RHG, then \eqref{eq:BEC} applies \cite{bal2022topological,bal2024topological,quinn2024approximations}. As a consequence, the QAH phases described in \eqref{eq:BDIj} in Theorem \ref{thm:class} manifest themselves in the asymmetry of the edge modes observed in the numerical simulations of section \ref{sec:num}.  

\bibliography{bibRHG} 
\bibliographystyle{siam}

\end{document}